\newcommand{\cmark}{\ding{51}}%
\newcommand{\xmark}{\ding{55}}%
\definecolor{cb1}{HTML}{F9CECC}
\definecolor{cb2}{HTML}{4877BD}
\definecolor{axi1}{HTML}{1F77B4}
\definecolor{axi2}{HTML}{FF7F0E}
\definecolor{axi3}{HTML}{2CA02C}
\definecolor{axi4}{HTML}{D62728}
\definecolor{cb3}{HTML}{80ABB3}
\definecolor{cb2d}{HTML}{445A5E}
\definecolor{clg}{HTML}{C2C2C2}
\definecolor{cdg}{HTML}{5E5E5E}
\pgfplotsset{compat=newest}
\newcommand\review[1]{\textcolor{black}{#1}}
\newtheorem{lemma}{Lemma}
\newtheorem{theorem}{Theorem}
\def\BibTeX{{\rm B\kern-.05em{\sc i\kern-.025em b}\kern-.08em
    T\kern-.1667em\lower.7ex\hbox{E}\kern-.125emX}}
\title{TOP: Towards Open \& Predictable Heterogeneous SoCs}
\author{Luca Valente, 
Francesco Restuccia,
Davide Rossi,~\IEEEmembership{Member,~IEEE} \\
Ryan Kastner,~\IEEEmembership{Fellow,~IEEE} 
Luca Benini, ~\IEEEmembership{Fellow,~IEEE} 
\IEEEcompsocitemizethanks{ \IEEEcompsocthanksitem Luca Valente, Luca Benini, and Davide Rossi are with the Department of Electrical, Electronic and Information Engineering, University of Bologna, 40136 Bologna, Italy.  Luca Benini is also with the Integrated Systems Laboratory (IIS), ETH Zürich, 8092 Zürich, Switzerland.
\IEEEcompsocthanksitem Francesco Restuccia and Ryan Kastner are with the Computer Science and Engineering, University
of California at San Diego, San Diego, CA 92093 USA.}
\thanks{This work was supported by Technology Innovation Institute, Secure Systems Research Center, Abu Dhabi, UAE, PO Box: 9639, by the Spoke 1 on Future HPC of the Italian Research Center on High-Performance Computing, Big Data and Quantum Computing (ICSC) funded by MUR Mission 4 - Next Generation EU, and by the European Project EuroHPC JU The European Pilot (g.a. 101034126), and by KDT TRISTAN project (g.a.101095947).}}
\begin{document}

\bstctlcite{IEEEexample:BSTcontrol}

\IEEEtitleabstractindextext{%
\begin{abstract}
Ensuring predictability in modern real-time Systems-on-Chip (SoCs) is an increasingly critical concern for many application domains such as automotive, robotics, and industrial automation. An effective approach involves the modeling and development of hardware components, such as interconnects and shared memory resources, to evaluate or enforce their deterministic behavior. Unfortunately, these IPs are often closed-source, and these studies are limited to the single modules that must later be integrated with third-party IPs in more complex SoCs, hindering the precision and scope of modeling and compromising the overall predictability. With the coming-of-age of open-source instruction set architectures (RISC-V) and hardware, major opportunities for changing this status quo are emerging. This study introduces an innovative methodology for modeling and analyzing State-of-the-Art (SoA) open-source SoCs for low-power cyber-physical systems. Our approach models and analyzes the entire set of open-source IPs within these SoCs and then provides a comprehensive analysis of the entire architecture.
We validate this methodology on a sample heterogenous low-power RISC-V architecture through RTL simulation and FPGA implementation, minimizing pessimism in bounding the service time of transactions crossing the architecture between 28\% and 1\%, which is considerably lower when compared to similar SoA works.
\end{abstract}
\begin{IEEEkeywords}
\review{Heterogeneous SoC, Cyber-Physical-Systems, Timing Predictable Architectures, Open-Source Hardware.}
\end{IEEEkeywords}}

\maketitle

\section{Introduction}\label{sec:intro}

The exponential growth of cyber-physical systems (CPS) (e.g., self-driving cars, autonomous robots, ...) and related applications has been fueled by the increase in computational capabilities of heterogeneous low-power Systems-on-Chip (SoCs).
These SoCs are complex computing platforms composed of a set of different hardware computing units (e.g., CPUs, hardware accelerators), each tailored to a specific target application, sharing a set of resources (memory, sensors) through interconnects\cite{jiang2022axi,biondi2021sphere,jiang2022bluescale,restuccia2020axi,ooocoresareSOnasty}.
While integrating multiple computing units on the same platform has enabled efficient scale-up of computational capabilities, it also poses significant challenges when it comes to assessing their \textit{timing predictability}, which is a requirement for CPSs dealing with real-time and safety-critical applications: the primary challenge arises from resource contentions that emerge when multiple active agents within the SoC must access the same shared resources\cite{jiang2022axi,restuccia2020axi,biondi2021sphere,fernandez2014contention,enabling_compositionability,jiang2022bluescale,ooocoresareSOnasty}.

Numerous research efforts have focused on enhancing the \review{timing} predictability of heterogeneous Systems-on-Chip (SoCs). \review{This includes safely upper bounding execution times for data transfers \cite{hassanzoni,restuccia2022bounding,wu2023ditty} or the deadline miss ratio for critical tasks\cite{jiang2022axi,biondi2021sphere,jiang2022bluescale}, with the smallest possible pessimism}. 
These efforts have predominantly focused on modeling and analyzing commercial DDR protocols \cite{hassanzoni}, memory IPs \cite{hassan}, and memory controllers \cite{mirosanlou2020mcsim}, but also predictable interconnects\cite{restuccia2020axi,jiang2022axi} and on-chip communication protocols\cite{MPAM-analysis}.
\review{Regrettably, despite their value, these studies are scattered, with each one focusing on only one of these resources at a time, resulting in being overly pessimistic\cite{ooocoresareSOnasty}}.

\review{Modeling and analysis of communication protocols are done speculatively on abstract models, thus reducing their real-world applicability.}
Recent works for modeling and analysis of IPs (memories, memory controllers, interconnect, etc.) have to address the unavailability of cycle-accurate RTL descriptions. Many of these IPs are either entirely closed-source\cite{hassanzoni} or provide loosely-timed behavioral models\cite{mirosanlou2020mcsim,ooocoresareSOnasty} or just $\mu$architectural descriptions\cite{restuccia2020axi,jiang2022axi,jiang2022bluescale}.
\review{In essence, the fragmented and proprietary nature of commercial and research IPs restricts studies to the particular IP, greatly reducing the accuracy achievable through system-level analysis.}
For example, Restuccia et al. in \cite{restuccia2022bounding} bound the access times of multiple initiators on FPGA reading and writing from/to the shared DDR memory.
The proposed upper bounds' pessimism is between 50\% and 90\%: even though they finely modeled and analyzed the proprietary interconnect, the authors did not have access to its RTL nor to the memory controller and IP.
The same applies to Ditty \cite{wu2023ditty}, which is a predictable cache coherence mechanism. 
In Ditty, even though the caches' timing is finely modeled, the overall execution time can be up to 3$\times$ bigger than the theoretical upper bounds, as the authors did not model other components. 
Another example is $\textrm{AXI-IC}^{\text{RT}}$ \cite{jiang2022axi}, an advanced AXI interconnect with a sophisticated scheduler which allows transaction prioritization based on importance.
While proposing a highly advanced interconnect with a tightly coupled model, the authors do not extend the model to the other components of the SoC, even when assessing the deadline miss ratio and benchmarking the architecture.

The emergence of open-source hardware creates a major opportunity for building accurate end-to-end models for real-time analysis of cutting-edge heterogeneous low-power SoCs \cite{open-source-hw,cheshire,shaheen}: the openness of the IPs allows for cycle-accurate analysis of the whole architecture from the interconnects to the shared resources.
Yet, investigations and successful demonstrations in this direction are still scarce, primarily because open hardware has only very recently reached the maturity and completeness levels required to build full heterogeneous SoCs\cite{stinky}.
\review{In this context, this is the first work to bridge the gap between open-source hardware and timing analysis, demonstrating a methodology that successfully exploits the availability of the source code  to provide fine-grained upper bounds of the system-level data transfers.}
We leverage a set of open-source IPs from the PULP family, one of the most popular open-hardware platforms proposed by the research community~\cite{open-source-hw,pulp-platform-git}. 

\begin{figure}
    \centering
    \includegraphics[width=\linewidth]{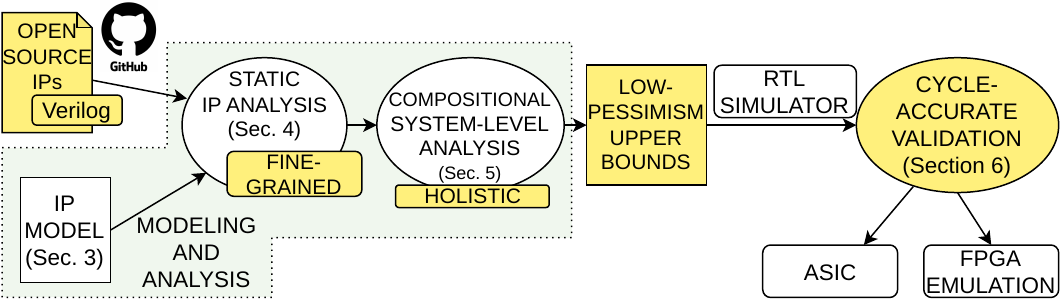}
    \caption{Proposed methodology.}
    \label{fig:methodology}
\end{figure}

\review{Figure \ref{fig:methodology} shows the proposed methodology, highlighting the novel contributions in yellow. It consists of (i) a model for standalone IPs composing modern heterogeneous low-power SoCs, (ii) a static analysis of the RTL code of such components, and (iii) a compositional mathematical analysis of the whole system to upper bound the response time of the interactions between managers (initiators) and shared subordinates (targets), considering the maximum interference generated by the interfering managers.}
\review{Figure \ref{fig:methodology} highlights the differences between the proposed methodology and previous studies also based on a static and compositional approach\cite{enabling_compositionability,restuccia2022bounding,ooocoresareSOnasty}.}
\review{While previous works typically focus on one IP at a time \cite{restuccia2022bounding}, or rely on loosely-timed models\cite{ooocoresareSOnasty}, thereby limiting the overall accuracy, our approach is the first to model and analyze all the IPs directly from the RTL source code to build a holistic system-level analysis.}
\review{This limits the proposed upper bounds' pessimism between 28\% and just 1\%, in isolation and under interference, which is considerably lower when compared to similar SoA works for closed-source or loosely-timed platforms\cite{biondi2021sphere,wu2023ditty,jiang2022axi,jiang2022bluescale,restuccia2020axi,hassanzoni}, as better detailed in Section \ref{sec:related}.}
\review{We demonstrate our methodology on a completely open-source prototype of a heterogeneous low-power open-source SoC} for embedded systems composed of a Linux-capable host core, a parallel accelerator, a set of IOs, and on-chip and off-chip memories.

The manuscript is organized as follows: Section \ref{sec:arch} presents the target open-source RISC-V-based SoC architecture, and Section \ref{sec:model} discusses the model we apply to its different components. Section \ref{sec:wcea} analyzes the components to specialize the generic model to each of them, and Section \ref{sec:system-level-analysis} provides the system-level analysis of the architecture. Finally, Section \ref{sec:exps} validates the results with cycle-accurate experiments (on simulation and FPGA), Section \ref{sec:related} compares this work with the SoA. Section \ref{sec:conclusion} concludes the manuscript.

\section{Architecture}\label{sec:arch}

\begin{figure}[t]
    \centering
    \includegraphics[width=0.45\textwidth]{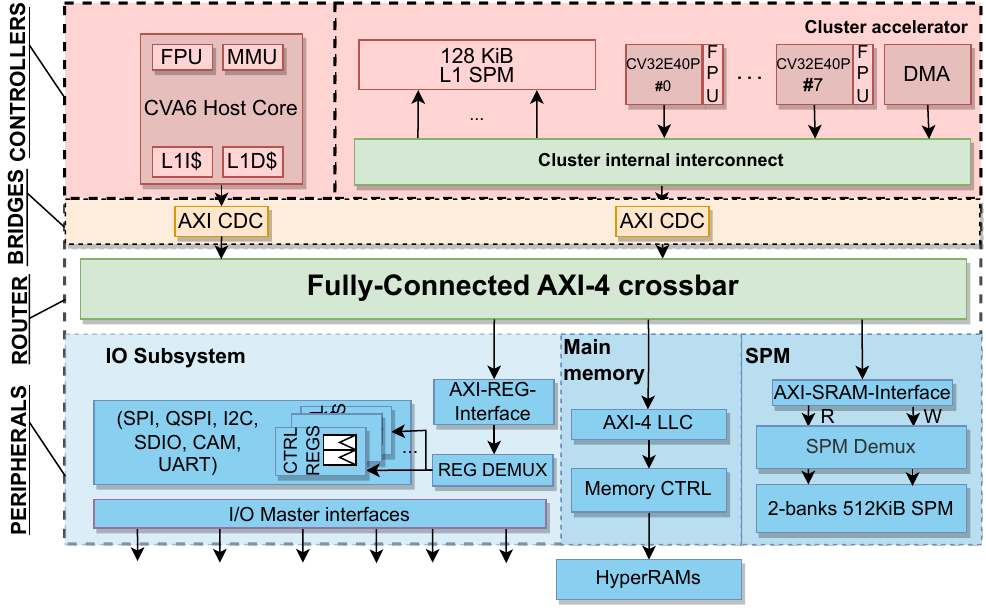}
    \caption{Sample architecture.}
    \label{fig:he-arch}
\end{figure}

Fig.~\ref{fig:he-arch} shows the architectural template we target. 
It also shows the four classes of hardware modules we identify in the architecture under analysis, namely (i) \emph{controllers}, (ii) the main \emph{crossbar}, (iii) \emph{bridges}, and (iv) \emph{peripherals}, which we model in the next Section.
The architecture leverages a set of fully open-source PULP IPs\cite{pulp-platform-git}.
It is based on Cheshire\cite{cheshire}, an open-source host platform consisting of an RV64 Linux-capable CPU, a set of commodity IOs (SPI, SDIO, UART, ...), and an AXI-based crossbar with a configurable number of subordinate and manager ports for easy integration of accelerators and resources.
Our platform includes a parallel accelerator and a low-power lightweight HyperBUS memory controller\cite{hulk-v}, connected to the crossbar.

The host CPU is CVA6\cite{cva6-git}, which is a six stages, single-issue, in-order, 64-bit Linux-capable RISC-V core, supporting the RV64GC ISA variant, SV39 virtual memory with a dedicated Memory Management Unit (MMU), three levels of privilege (Machine, Supervisor, User), and PMP~\cite{pie}.
\review{CVA6 features private L1 instruction and caches, operating in parallel, with the latter being able to issue multiple transactions.}
\review{When needed, CVA6 can offload computation-intensive tasks to} the parallel hardware accelerator, the so-called PULP cluster\cite{pulp-cluster-git}.
It is built around 8 CV32E4-based cores~\cite{cv32e40p} sharing 16$\times$8 kB SRAM banks, composing a 128 kB L1 \review{Scratchpad Memory} (SPM). 
\review{The cluster features a DMA to perform data transfers between the private L1SPM and the main memory: data movement is performed via software-programmed DMA transfers.}
\review{Once the data are available inside the L1SPM, the accelerator starts the computation.}

CVA6 and the cluster are the managers of the systems connected to the main AXI crossbar\cite{Kurth_2021}, which routes their requests to the desired subordinates according to the memory map.
\review{A manager can access any subordinate in the system.}
The main subordinates of the systems are, respectively, (i) the on-chip SRAM memory, (ii) the IO subsystem, and (iii) the off-chip main memory \review{with a tightly coupled Last Level Cache (LLC)}.
The on-chip memory is used for low-latency, high-bandwidth data storage.
The APB subsystem is used to communicate with off-chip sensors or memories through the commodity IOs.
\review{The off-chip main memory is where the code and the shared data are stored.}
Differently from high-end embedded systems relying on relatively power-hungry and expensive DDR3/4/5 memories, the platform under analysis adopts HyperRAMs as off-chip main memory, which are fully-digital low-power small-area DRAMs with less than 14 IO pins and that provide enough capacity to boot Linux \cite{shaheen} and bandwidth for IoT applications\cite{hyperram_low_pincount,hulk-v}.

\section{Model}\label{sec:model}

This section presents the model we construct for the different components of our SoC.
\review{Our aim is to propose a general model that describes the characteristics of the components and that can be re-targeted to different IPs and novel architectures, regardless of the number of integrated controllers and peripherals}.
This work is also an effort to provide base support to stimulate further studies in predictability improvements and analysis for open hardware architectures.

\subsection{Communication model}

We identify four classes of hardware modules in the architecture under analysis, shown in Fig. \ref{fig:he-arch}, namely (i) \emph{controllers}, (ii) the main \emph{crossbar}, (iii) \emph{bridges}, and (iv) \emph{peripherals}.
\review{As the AXI standard is the main communication standard used to implement non-coherent on-chip communications\cite{Kurth_2021}, we discuss here its main features.}
It defines a manager-subordinate interface enabling simultaneous, bi-directional data exchange and multiple outstanding transactions. 
Fig. \ref{fig:axi-chan} shows the AXI channel architecture and information flow.
Bus transactions are initiated by a \emph{controller} (exporting a manager interface), submitting a transaction request to read/write data to/from a subordinate interface through AR or AW channels, respectively.
A request describes the starting target address and a \textit{burst length}.
After the request phase, in case of a read, data are transmitted through the R channel.
In case of a write, data are provided by the \emph{controller} to the target \emph{peripheral} through the W channel.
Upon completing a write transaction, the \emph{peripheral} also sends a beat on the B channel to acknowledge the transaction's completion. 
For multiple in-flight write transactions, the standard enforces strict in-order access to the W channel: the data on the W channel must be propagated in the same order as the AW channel requests.
Even though the standard does not require it, many commercial and open-source platforms apply the same policy for reads, typically to limit the system's overall complexity, as reported in their documentation~\cite{zynq-700,axi-bram}.

\subsection{Controller model}\label{model:controller}

\emph{Controllers} have an active role on the bus. Each \emph{controller} exports an AXI manager interface, through which it initiates requests for bus transactions directed to the \emph{peripherals}.
A generic \emph{controller} $C_i$ can be described through two parameters: the maximum number of outstanding read/write transactions that it can issue in parallel, denoted with $\phi_{R/W}^{C_i}$, and their relative burst length $\beta_i$.
While our model and analysis can be applied to a generic architecture, the system under analysis features as \emph{controllers} a CVA6 core~\cite{cva6-git} and a cluster accelerator~\cite{pulp-cluster-git} (see Section~\ref{sec:arch}).
Bus transactions issued by the cluster interfere with those issued by CVA6 and vice-versa.
CVA6 is assumed to compute a critical periodic workload, running on top of a Real-time Operating System (RTOS).
\review{The PULP cluster executes computation-intensive tasks and issues bus transactions through its DMA.}
\review{Contention internal to the PULP cluster has been profiled in detail in \cite{cluster-contention}.}
\review{However, our analysis provides the worst-case data transfer time in accessing the shared \emph{peripherals} to support the safe scheduling and execution of critical tasks within their deadline.}
\review{We specifically focus on interference in accessing the shared resources. Modeling the internal effects of \emph{controllers}, such as pipeline stalls in the core or contention within the accelerator, is beyond the scope of this work.}

\begin{figure}[t]
    \centering
    \includegraphics[width=0.9\linewidth]{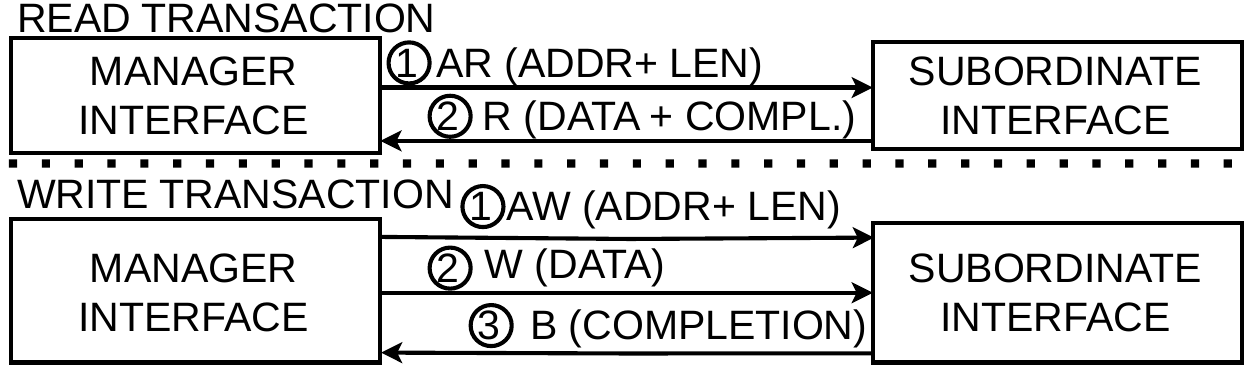}
    \caption{AXI Channel architecture}
    \label{fig:axi-chan}
\end{figure}

\subsection{Peripheral model}\label{model:peripheral}

\emph{Peripherals} export a \emph{subordinate} interface through which they receive and serve the bus transactions.
\review{The \emph{peripherals} deployed in the system are heterogeneous. Nonetheless, our model offers a set of parameters representative of a generic peripheral, and it is not tied to a specific communication protocol.}
\review{It} works as the baseline for the analysis of any \emph{peripheral} deployed in the system under analysis.
The generic \emph{peripheral} $P_j$ is characterized with two sets of parameters: (i) the maximum number of supported outstanding reads ($\chi_{R}^{P_j}$) and write ($\chi_{W}^{P_j}$) transactions; (ii) the maximum number of cycles incurred from the reception of the request to its completion, for a read ($d_{R}^{P_j}$) and a write ($d_{W}^{P_j}$) transaction in isolation.
$d_{R}^{P_j}$ and $d_{W}^{P_j}$ are composed of two contributions: (i) the \emph{data time}, defined as the time required for the \emph{peripheral} to send or receive one word of data ($t_{\text{DATA}}$) multiplied by the burst length of the transaction in service ($\beta_i$) and (ii) the \emph{control overhead} $t_{\text{CTRL}}$, defined as the maximum time elapsing between accepting the request and the availability of the first word of data (reads) or availability to receive data (writes).
From the previous considerations, $d_{R/W}^{P_j} = t_{\text{CTRL}}^{P_j} + t_{\text{DATA}}^{P_j}\cdot \beta$.
We define two extra parameters $\rho^{P_j}$ and $\theta^{P_j}$.
The first indicates the level of pipelining in serving multiple transactions. 
$\rho^{P_j}=1$ means that each stage of $P_j$ does not stall the previous, and transactions are served in a pipelined fashion, while $\rho^{P_j}=0$ indicates that no pipeline is implemented. 
$\theta^{P_j}=0$ indicates that read and write transactions interfere with each other. 
$\theta^{P_j}=1$ indicates that read and write transactions can be handled in parallel by $P_j$.

\subsection{Main crossbar model}\label{model:router}
We provide here the model of the main \emph{crossbar}, the routing component enabling communication among \emph{controller}s and \emph{peripheral}s. 
Each \emph{controller} has its manager port connected to a subordinate port of the \emph{crossbar}. 
Each \emph{peripheral} has its subordinate port connected to a manager port of the \emph{crossbar}. 
We model the \emph{crossbar} $R_0$ with two sets of parameters: (i) the maximum amount of outstanding read and write transactions that a subordinate port can accept ($\chi_{R}^{R_0}$ and $\chi_{W}^{R_0}$, respectively); and (ii) the maximum overall latency introduced by $R_0$ on each read ($d_{R}^{R_0}$) and write transaction ($d_{W}^{R_0}$).
$d_{R}^{R_0}$ and $d_{W}^{R_0}$ are composed of two contributions: (i) the overall delay introduced by the \emph{crossbar} on a transaction in isolation ($t_{\text{PROP}}$); (ii) the maximum time a request is delayed at the arbitration stage due to the contention generated by interfering transactions ($t_\text{{CON}}^{R_0}$).
From the previous considerations, the propagation latency is modeled as $d_{R/W}^{R_0} = t_{\text{PROP}}^{R_0} + t_{\text{CON}}^{R_0}$.
Such parameters depend on the arbitration policies and routing mechanisms, as we investigate in detail in Section~\ref{sec:wcea}.

\subsection{Bridge model}\label{model:bridge}

Bridges export a single manager interface and a single subordinate interface. They perform protocol/clock conversion between a \emph{controller} and the \emph{crossbar}.
Bridges require a certain number of clock cycles to be crossed but do not limit the number of in-flight transactions and do not create any contention.
We model the bridges with two parameters: the overall maximum delay introduced over a whole transaction for (a) read ($d_{R}^{Q_j}$) and (b) write ($d_{W}^{Q_j}$) transactions.

\section{Analysis of the hardware modules}\label{sec:wcea}

This Section aims to analyze the worst-case behavior of the \emph{peripherals}, \emph{bridges}, and the \emph{crossbar} present in the platform under analysis.
Our approach is compositional -- in this Section, we analyze each hardware component separately, specializing in the generic models introduced in Section~\ref{sec:model}, and bounding the service times at the IP level in isolation.
In the next Section, we provide an overall worst-case analysis at the system level, in isolation and under interference.
We define $t_{\text{CK}}^{P_j}$ as the period period of the clock fed to $P_j$.
%

\subsection{AXI CDC FIFO queues}\label{ssec:axi_cdc_fifo}

AXI CDC FIFOs are leveraged to perform clock-domain crossing between two AXI-based devices. 
The generic AXI CDC FIFO $F_i$ is a \emph{bridge}: we apply here the model presented in Section \ref{model:bridge}. 
It exports a manager interface and a subordinate interface. It is composed of five independent CDC FIFOs, each serving as a buffer for an AXI channel, having depth $D^i_{\text{CDC}}$ (design parameter for the IP under analysis). 

\subsubsection{RTL IP structure} Figure \ref{fig:cdc-bd} shows the block diagram of a CDC FIFO in the platform under analysis.
They are structured following established clock domain crossing (CDC) principles~\cite{Kurth_2021}. 
The design is split into two parts, the transmitter (TX) and the receiver (RX), having different clock domains. 
TX and RX interface through asynchronous signals, namely a counter for data synchronization (synchronized with two-stage Flip-Flops (FFs)) and the payload data signal. 

\subsubsection{Delays analysis} As mentioned earlier, CDC FIFOs are \emph{bridges}: we apply the model presented in Section~\ref{model:bridge}. 
The CDC FIFO under analysis behaves as follows: TX samples the payload data into an FF. In the following cycle, the TX counter is updated.
The TX counter value gets then through two synchronizations FFs -- the updated pointer value is observed by the RX after two clock cycles. 
At that point, RX samples the data in one clock cycle to then propagate it in the following one. 
It follows that crossing the CDC FIFO introduces a fixed delay of one clock cycle of the TX domain ($t_\text{{CK}}^\text{TX}$) and four clock cycles of the RX domain ($t_\text{{CK}}^\text{RX}$). This means that the delay in crossing the CDC FIFO is equal to $t_{\text{CDC}}(t_\text{{CK}}^\text{TX},t_\text{{CK}}^\text{RX}) = t_\text{{CK}}^\text{TX} + 4 \cdot t_\text{{CK}}^\text{RX}$.
We leverage this baseline delay to build the overall latency introduced by $F_i$, interposed between a manager (clocked at $t_\text{{CK}}^{C}$) and a subordinate (clocked at $t_\text{{CK}}^{P}$).

\textit{Read transaction:} 
A read transaction $AR_k$ is composed of two phases: (i) the address propagation phase and (ii) the data phase. 
This means that $F_i$ is crossed twice to complete $AR_k$: during phase (i), the manager is on the TX side, propagating the request. In phase (ii), the subordinate is on the TX side, propagating the data. 
Hence, the propagation latency is $t_{\text{CDC}}(t_\text{{CK}}^{C},t_\text{{CK}}^{P})$ in phase (i) and $t_{\text{CDC}}(t_\text{{CK}}^{P},t_\text{{CK}}^{C})$ in phase (ii).
Adding them together, the propagation latency introduced by $F_i$ on $AR_k$ is equal to:
\begin{equation}\label{eq-delay-CDC}
\footnotesize
       d_{R}^{\text{CDC}}= t_{\text{CDC}}(t_\text{{CK}}^{C},t_\text{{CK}}^{P}) + t_{\text{CDC}}(t_\text{{CK}}^{P},t_\text{{CK}}^{C}) = 5  ( t_{\text{CK}}^{C} + t_{\text{CK}}^{P} )\\
\end{equation}

\textit{Write transaction:} 
A write transaction is composed of three phases: (i) an address phase (manager on the TX side), (ii) a data phase (manager on the TX side), and (iii) a write response phase (subordinate on the TX side). 
Phases (i) and (ii) happen in parallel (see~\cite{ARMAXI} p. 45).
Thus, $t_{\text{CDC}}(t_\text{{CK}}^{C},t_\text{{CK}}^{P})$ is incurred for phases (i) and (ii), and $t_{\text{CDC}}(t_\text{{CK}}^{P},t_\text{{CK}}^{C})$ for phase (iii). 
The delay introduced by $F_i$ on $AW_k$ is equal to the delay introduced in Equation~\ref{eq-delay-CDC}, $d_{W}^{\text{CDC}} = d_{R}^{\text{CDC}}$.

\begin{figure*}[htb]
    \begin{minipage}[t]{.33\textwidth}
        \centering
        \includegraphics[width=\textwidth]{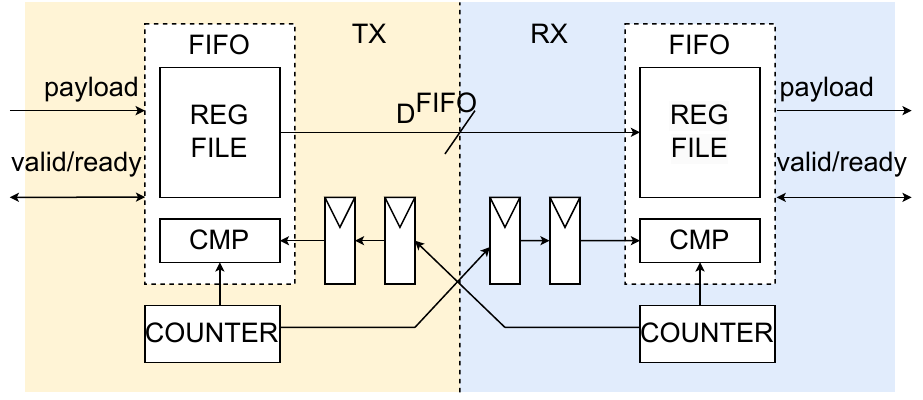}
        \caption{CDC FIFO block diagram.}\label{fig:cdc-bd}
    \end{minipage}
    \hfill
    \begin{minipage}[t]{.33\textwidth}
        \centering
        \includegraphics[width=\textwidth]{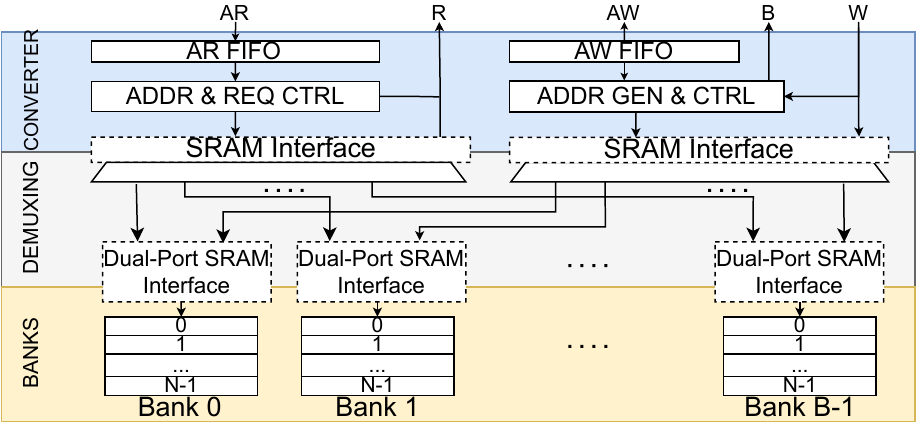}
        \caption{AXI SPM block diagram}\label{fig:axi-spm}
    \end{minipage}
    \hfill
    \begin{minipage}[t]{.32\textwidth}
        \centering
        \includegraphics[width=\textwidth]{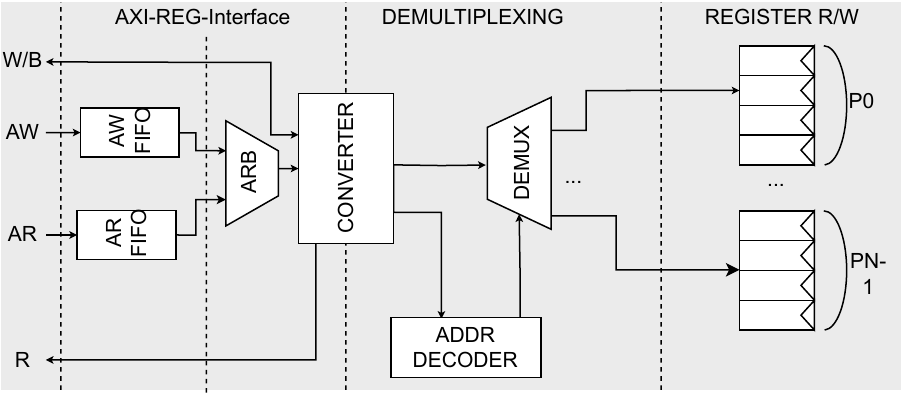}
        \caption{IO subsystem block diagram.}\label{fig:io_subsystem}
    \end{minipage}
\end{figure*}

\subsection{AXI SRAM scratchpad memory (SPM)}\label{ssec:l2spm}

The AXI SPM is a high-speed, low-latency memory component used for temporary data storage -- a block design representation is reported in Figure~\ref{fig:axi-spm}. 
The SPM memory is a \emph{peripheral}: we apply here the model presented in Section~\ref{model:peripheral}. 

\subsubsection{RTL IP structure} 
The first stage of the SPM architecture is represented by a protocol converter (AXI-SRAM-Interface), translating the read and write AXI channels into SRAM-compatible transactions.
Following the converter, an internal demux directs the SRAM transactions to the desired SRAM bank, where the data is stored.
Each SRAM bank provides two independent SRAM ports, one for reads and one for writes, as from the specification of industry-standard SRAM resources~\cite{2port-manual}.

\textit{The AXI-SRAM-Interface} is structured in two submodules, independently managing read and write transactions. 
%
%
The first stage of each submodule is a FIFO queue (of depth $D_{\text{FIFO}}^{\text{SPM}}$) buffering the AXI AW or AR channel, respectively. 
Each submodule features the logic for protocol translation, consisting of (i) saving transaction metadata (starting address and length) and (ii) producing the output SRAM requests. 
For writes, the incoming data on the W channel are directly propagated towards the banks.
The logic operating the protocol conversion generates the address for each W beat. 
For reads, the data coming from the SRAM banks are directly driven on the R channel.
The logic keeps compliance with the AXI standard, adding the last signal or generating write responses when required. 
\textit{The demux} is fully combinatorial and selects the target bank according to the request's address.  
\textit{The SRAM banks} are technology-specific macros instantiated at design time. 
Each SRAM bank's port exports an enable signal, an address signal, and a signal to determine if a transaction is a read or a write.
The SRAM interface expects simultaneous propagation of data and commands for writes; for reads, the data are sent the cycle following the command.

\subsubsection{Delays and parallelism analysis} 

\textit{AXI-SRAM-Interface:} the FIFOs in the converter are only in charge of data buffering -- each FIFO introduces a fixed delay of one clock cycle ($t_{\text{CK}}^{\text{SPM}}$). 
After the FIFOs, the control logic requires at most one clock cycle ($t_{\text{CK}}^{\text{SPM}}$) to set up the propagation of a burst transaction -- the direct connection over the W and R channels makes the data streaming in a pipeline fashion, adding no further latency.
At the end of a write transaction, the converter takes two clock cycles ($2t_{\text{CK}}^{\text{SPM}}$) to generate the write response: one to acknowledge that the last W beat has been accepted and one to provide the B response.
The same applies to reads, to generate the AXI last signal.
%
%
Summing up the contributions, the control latency introduced by the AXI-SRAM-Interface to each transaction is upper bound by $4t_{\text{CK}}^{\text{SPM}}$ for both reads and writes. 

\textit{Demux:} The demultiplexing is combinatorial: it connects the transaction to the SRAM bank in one clock cycle ($t_{\text{CK}}^{\text{SPM}}$).

\textit{Banks:} As by the definition of the SRAM interface~\cite{2port-manual}, an SRAM bank serves one transaction per clock cycle, which makes $t_{\text{DATA,R/W}}^{\text{SPM}}=t_{\text{CK}}^{\text{SPM}}$. 
For write transactions, the protocol guarantees that the SRAM bank samples the data in parallel with the request (in the same clock cycle). 
For read transactions, the data are served the clock cycle after the bank samples the request. So, it contributes to $t_{\text{CTRL,R}}^{\text{SPM}}$ with one clock cycle ($t_{\text{CK}}^{\text{SPM}}$).
Summing up the contributions, the service time of the SPM in isolation is upper bound by: 
\begin{equation}\label{worst_tcdm_gen_1}
\footnotesize
 t_{\text{CTRL,W}}^{\text{SPM}}  = 5 \cdot t_\text{CK}^{\text{SPM}} ; t_{\text{CTRL,R}}^{\text{SPM}}  = 6 \cdot t_\text{CK}^{\text{SPM}} ; t_{\text{DATA,R/W}}^{\text{SPM}}  =  t_{\text{CK}}^{\text{SPM}}; 
\end{equation}

Consider now the parallelism supported by the SPM. The maximum number of accepted outstanding transactions at the SPM $\chi_R^{\text{SPM}}$ is defined by the depth $D_{\text{FIFO}}^{\text{SPM}}$ of the input buffers implemented in the AXI-SRAM-Interface. Thus,
\begin{equation}
\footnotesize
\chi_R^{\text{SPM}} = \chi_W^{\text{SPM}} = D_{\text{FIFO}}^{\text{SPM}} 
\end{equation}

The \textit{SPM} module under analysis is aggressively pipelined, operations are executed in one clock cycle, and no stall sources are present in the design. Also, as mentioned earlier, read and write transactions do not interfere with each other. From the previous considerations, $\rho^{\text{SPM}} = 1$ and $\theta^{\text{SPM}}=1$. 

\subsection{IO Subsystem}\label{analysis:io}

The IO subsystem is the \emph{peripheral} in charge of writing/reading data to/from the off-chip I/Os. 
We apply here the model presented in Section~\ref{model:peripheral}. 
It is composed of a set of memory-mapped peripheral registers that are accessed through a demux and that manage the datapaths issuing the transactions on the I/O interfaces (e.g., SPI, I2C, etc.).

\subsubsection{RTL IP structure} Figure~\ref{fig:io_subsystem} shows the block diagram of the IO subsystem.
It is composed of an AXI-REG-Interface, a demux, and a set of registers. 
The first stage of the \textit{AXI-REG-Interface} is composed of two FIFOs (of depth $D_{\text{FIFO}}^{\text{IO}}$), buffering read and write transactions, respectively.
After the FIFOs, a round-robin arbiter manages read and write transactions, allowing only one at a time to pass to the protocol conversion.
Since the IO subsystem is meant for low-power reads and writes, registers' transactions share the same set of signals for reads and writes and are limited to single-word accesses.
For such a reason, the IO subsystem does not support burst transactions (requests having $\beta_i > 1$ are suppressed). 
%
%
\textit{The demux} stage decodes the request and directs it to the proper register destination, where it is finally served as a register read or write.

\subsubsection{Delays and parallelism analysis} The IO subsystem is a \emph{peripheral}, thus, we apply the model proposed in Section~\ref{model:bridge}.
Considering the maximum service delays, overall, the IO subsystem is composed of four stages: (i) the FIFOs, (ii) the protocol conversion, (iii) demultiplexing, and (iv) target register access.
The first three stages, contributing to the control overhead, introduce a fixed delay of one clock cycle ($t_{\text{CK}}^{\text{IO}}$) each for a total of $3 \cdot t_{\text{CK}}^{\text{IO}}$ clock cycles. 
Consider now stage (iv).
In the case of a write, the request and the corresponding data are propagated in parallel in one clock cycle. 
In the case of a read, the register provides the data in the clock cycle following the request -- $t_{\text{CTRL}}^{\text{IO}}$ requires one extra clock cycle.
Summing all the contributions, the service time of the I/O subsystem is upper bounded by:
\begin{equation}
\footnotesize
    t_{\text{CTRL,W}}^{IO} = 3 \cdot t_{\text{CK}}^{\text{IO}} ;  \quad
    t_{\text{CTRL,R}}^{IO} = 4 \cdot t_{\text{CK}}^{\text{IO}} ;  \quad 
    t_{\text{DATA,W/R}}^{IO} = t_{\text{CK}}^{\text{IO}} 
\end{equation}

Consider now the parallelism. Similarly to the SPM module, the IO subsystem is capable of buffering up to $D_{\text{FIFO}}^{\text{IO}}$ of each type in its input FIFO queues. Thus, the maximum number of outstanding transactions supported by the IO subsystem is equal to:
\begin{equation}
\footnotesize
       \chi_{W}^{\text{IO}} = \chi_{R}^{\text{IO}} = D_{\text{FIFO}}^{\text{IO}}
\end{equation}
The IO subsystem serves read and write transactions one at a time, and no pipelining is implemented among the different stages. This means that $\rho^{\text{IO}} = 0$ and $\theta^{\text{SPM}}=0$.

\subsection{The main memory subsystem}\label{analysis:ms}

The main memory subsystem is a \emph{peripheral}: we apply here the model presented in Section~\ref{model:peripheral}. 
It is composed of three macro submodules: (i) the \emph{AXI Last-level Cache (LLC)}; (ii) the \emph{HyperRAM memory controller (HMC)}; and (iii) the \emph{HyperRAM memory (HRAM)}.
It is based on HyperRAM memories leveraging the HyperBUS protocol~\cite{hyperram_low_pincount}.
HyperRAMs are optimized for low-overhead data storage while offering up to 3.2Gbps bandwidth.
HyperRAMs expose a low pin count, a fully digital 8-bit double-data-rate (DDR) interface used for commands and data.
HyperRAMs serve transactions in order, one at a time, as required by the protocol~\cite{hyperram_low_pincount}.
While a pure in-order strategy is simpler than those deployed by high-end commercial memory controllers, it is important to note that these controllers are typically complex closed-source IPs, making detailed analysis extremely challenging. \review{Notably, our analysis is the first to explore this level of detail.}
Furthermore, the memory subsystem under analysis has shown to be effective in tape-outs of Linux-capable chips~\cite{shaheen}.
\review{We model the service times of a single transaction in case of an LLC hit and miss. By doing so, we provide upper bounds that can be leveraged by future studies focusing on LLC interference between different \emph{controllers} at the application level. For example, advanced cache management studies for real-time applications (e.g., cache coloring) could leverage the upper bounds provided here to bound overall task execution times.}

\subsubsection{RTL IP structure}

\begin{figure*}
    \centering
    \includegraphics[width=\textwidth]{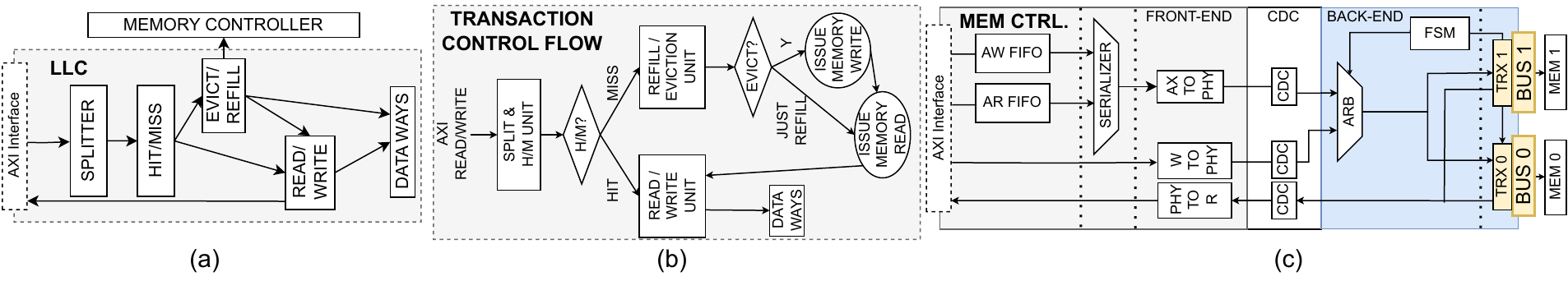}
    \caption{Block diagrams of the components of the main memory subsystem. (a) LLC block diagram, (b) Transaction control flow diagram, (c) Memory controller block diagram.}
    \label{fig:mem-sub-complete}
\end{figure*}

\textbf{\textit{The AXI Last-Level Cache}} is the interface of the memory subsystem with the platform. 
%
%
The LLC under analysis has configurable cache line length, defined as $LW_{\text{LLC}}$.
Figure~\ref{fig:mem-sub-complete}(a) shows the LLC's block diagram, composed of 5 pipelined units: (i) burst splitter, (ii) hit-miss detection, (iii) eviction/refill, (iv) data read/write, and (v) data ways. 
Figure~\ref{fig:mem-sub-complete}(b) shows how these units cooperate to serve the requests.
The burst splitter buffers and splits the incoming AXI requests into multiple sub-requests that have the same length of the cache line, and it calculates the tags of the sub-transactions.
A $\beta_i$-word AXI burst request is split internally into $\lceil \frac{\beta_i}{LW_{\text{LLC}}} \rceil$ requests of length $LW_{\text{LLC}}$.
The tags are the input to the hit-miss detection unit, which analyzes them to determine if any sub-request will be a (a) hit or (b) miss. 
In case (a), the transaction is directed to the read/write unit: if it is a (a.i) read, the read response is generated and immediately sent through the AXI subordinate port, completing the transaction.
In the case of a (a.ii) write, the locally cached value is updated, and a write response is generated and sent back to the AXI interface to complete the transaction.
In case (b), the transaction is submitted to the eviction/refill unit.
Refill is performed on every miss and consists of issuing a read to the memory controller to fetch the missing data and update the data way. 
Eviction is performed when a cache set is full to free the necessary spot before a refill. A Least Recently Used (LRU) algorithm is used in the module under analysis. 

\textbf{\textit{The HyperRAM memory controller}}~\cite{hyper-git} is depicted in Figure~\ref{fig:mem-sub-complete}(c).
It consists of two tightly coupled modules working in two separated frequency domains: (i) the AXI \emph{front-end} and (ii) the \emph{back-end} PHY controller.  
The front-end handles and converts the AXI transactions into data packets for the PHY controller; it runs at the same clock as the LLC ($t_{\text{CK}}^{\text{HMC}}$).
The back-end features a Finite State Machine (FSM) to send/receive the data packets and keep compliance with the HyperBUS protocol timings and data flow; it runs at the same clock as the HyperRAMs ($t_{\text{CK}}^{\text{HRAM}}$).
The back-end handles two off-chip HyperRAMs in parallel, configured with interleaved addresses. 
As each HyperRAM arranges data as 16-bit words, the word size of the back-end is $DW_{\text{HYPER}}=32$ bits.

The first stage of the front-end is composed of two FIFOs buffering incoming AXI read and write requests.
Then, a serializer solves conflicts among reads and writes, allowing only one AW or AR request at a time. 
Following, three modules translate between AXI and the back-end protocol: (i) AXTOPHY, translating the AXI AW or AR requests into commands for the back-end; (ii) PHYTOR converting the data words from the back-end into AXI read beats for the AXI interface; and (iii) WTOPHY, converting AXI W data beats into data words and generating write response at the end of the transaction. 
Three CDC FIFOs are deployed between the AXTOPHY, WTOPHY, and PHYTOR and the back-end.
The back-end deploys an internal FSM arranging the requests coming from the front-end into 48-bit vector requests, as required in the HyperBUS protocol, and propagating the data packets to/from the two physical HyperRAM memories through two \textit{transceivers} (TRX).

\textbf{\textit{The HyperRAM memory}} is an off-chip memory IP~\cite{hyperram_low_pincount}. It is provided with a cycle-accurate model, fundamental for our analysis purposes~\cite{hyper-model}.
Each HyperRAM is organized as an array of 16-bit words and supports one outstanding burst transaction, up to 1kB long.
As two HyperRAM are interleaved, the overall burst can be up to 2kB long\cite{hulk-v}.

\subsubsection{Delays and parallelism analysis}

We now bound the worst-case service time of the main memory subsystem, analyzing its components one at a time.
Starting with the LLC, we follow the control flow diagram reported in Figure~\ref{fig:mem-sub-complete}(b) to guide the explanation.
The LLC collects the requests incoming to the main memory.
Three scenarios can happen: (i) LLC cache hit, (ii) LLC cache miss with refill, and (iii) LLC cache miss with eviction and refill.

In case (i), the LLC directly manages the request, and no commands are submitted to the HMC.
The request proceeds through the LLC splitter, hit/miss unit, read/write unit, and data way stages. By design, each stage of the LLC requires a fixed number of clock cycles.
The burst splitter executes in one clock cycle ($t_{\text{CK}}^{\text{LLC}}$).
The hit/miss detection stage takes two clock cycles ($2t_{\text{CK}}^{\text{LLC}}$): one for tag checking and one to propagate the request to the read/write unit or the evict/refill unit.
The read/write unit requires one clock cycle ($t_{\text{CK}}^{\text{LLC}}$) to route the transaction to the data ways.
The data ways accept the incoming request in one clock cycle ($t_{\text{CK}}^{\text{LLC}}$) to then access the internal SRAM macros (same as the SPM, Section~\ref{ssec:l2spm}).
The internal SRAM takes one clock cycle to provide the read data ($t_{\text{CK}}^{\text{LLC}}$), but no further latency is required on writes.
Once it gets the response, the read/write unit routes the read channel to the AX interface, whereas it takes one clock cycle ($t_{\text{CK}}^{\text{LLC}}$) to generate the write B response at the end. 
Thus, read/write unit and data ways take together three clock cycles ($3t_{\text{CK}}^{\text{LLC}}$).
Summing up the contributions, the service time in case of a hit is upper bound by: 
\begin{equation}
\footnotesize
    t_{\text{CTRL,R/W}}^{\text{MS-HIT}} = 6 \cdot t_{\text{CK}}^{\text{LLC}};  \quad 
    t_{\text{DATA,R/W}}^{\text{MS-HIT}} = t_{\text{CK}}^{\text{LLC}};
\end{equation}
%

Consider now cases (ii) and (iii): the eviction and refill stage is also involved, and a read (for refill) and, optionally, a write (for eviction) is issued to the main memory.
Eviction and refill are run in parallel.
Each operation performs two steps, each taking one clock cycle: (a) generating a transaction for the main memory and (b) generating a transaction for the data way.
Thus, summing the latency introduced by the eviction and refill stage ($2t_{\text{CK}}^{\text{LLC}}$) with the ones from the other stages, the LLC's contribution to the overall control time in case of a miss is upper bound by:
\begin{equation}
\footnotesize
t_{\text{CTRL,R/W}}^{\text{LLC-MISS}} = t_{\text{CTRL,R/W}}^{\text{MS-HIT}} + 2 t_{\text{CK}}^{\text{LLC}}
\end{equation}

Consider now the delay introduced by the HMC on a generic request. 
Later, we will use it to bound the service time for the batch of transactions issued by the LLC.
As described earlier, the HMC is composed of (a) the front-end, (b) the CDC FIFOs, and (c) the back-end.  
Consider (a): each one of the front-end's submodules takes one clock cycle to sample and process the transaction, except for the serializer, which takes two. 
As transactions pass through 4 modules (FIFOs, serializer, AXITOPHY, and either WTOPHY or PHYTOR), the overall delay contribution of the front-end is equal to $5t_{\text{CK}}^{\text{HMC}}$. 
Consider now (b): these are the CDC FIFOs composing the AXI CDC FIFOs introduced in Section~\ref{ssec:axi_cdc_fifo}. 
For writes, the transmitter (TX) is the front-end, sending data to the back-end from the AXTOPHY and the WTOPHY.
As both transfers happen in parallel, the delay introduced by the CDC on a write is upper bound by $t_{\text{CDC}}(t_{\text{CK}}^{\text{HMC}},t_{\text{CK}}^{\text{HRAM}})$.
For reads, first, the front-end transmits (TX) the AXTOPHY request, and then the back-end transmits the data beats: the delay introduced by the CDC on a read is upper bound by $t_{\text{CDC}}(t_{\text{CK}}^{\text{HMC}},t_{\text{CK}}^{\text{HRAM}}) + t_{\text{CDC}}(t_{\text{CK}}^{\text{HRAM}},t_{\text{CK}}^{\text{HMC}})$. 
Consider now (c): the back-end's FSM parses the incoming request into a HyperRAM command in one cycle ($t_{\text{CK}}^{\text{HRAM}}$).
Following this, an extra cycle is required for the data to cross the back-end.
Summing up the contributions just described, the control time of the HMC on a generic transaction is upper bound by:
\begin{equation}
\footnotesize
\begin{split}
& t^{\text{HMC}}_{\text{CTRL,R}} = 5 \cdot t_{\text{CK}}^{\text{HMC}} + t_{\text{CDC}}(t_{\text{CK}}^{\text{HMC}},t_{\text{CK}}^{\text{HRAM}}) + t_{\text{CDC}}(t_{\text{CK}}^{\text{HRAM}},t_{\text{CK}}^{\text{HMC}}) + 2 \cdot t_{\text{CK}}^{\text{HRAM}} \\
& t^{\text{HMC}}_{\text{CTRL,W}} = 5 \cdot t_{\text{CK}}^{\text{HMC}} + t_{\text{CDC}}(t_{\text{CK}}^{\text{HMC}},t_{\text{CK}}^{\text{HRAM}}) + 2 \cdot t_{\text{CK}}^{\text{HRAM}}
\end{split}
\end{equation}

Consider now the delays introduced by the HyperRAM memories on a generic request.
The control overhead time to access the HyperRAM memory is defined by the HyperBUS protocol~\cite{hyperram_low_pincount}.
First, the 48-bit HyperRAM command vector is sent over the two memories in $3\cdot t_{\text{CK}}^{\text{HRAM}}$ clock cycles, as the HyperBUS command bus is 16 bits.
Following, the HyperBUS provides a fixed latency for the maximum time to access the first data word, accounting for refresh effects and crossing row boundaries.
The specifications~\cite{hyper-spec} bound such a delay between 7 and 16 clock cycles. 
In our case, this is set to $12 \cdot t_{\text{CK}}^{\text{HRAM}}$. Thus, the total control latency of the HyperRAM memory is upper bound by: 
\begin{equation}
\footnotesize
    t_{\text{CTRL,R/W}}^{\text{HRAM}}=15 \cdot t_{\text{CK}}^{\text{HRAM}} 
\end{equation}
At this point, data are ready to be propagated. 
As the AXI domain and the HyperRAM have different data widths, the number of cycles to send/receive an AXI word is: 
\begin{equation}
    \footnotesize
        t_{\text{DATA,R/W}}^{\text{HRAM}} = DW_{\text{HYPER}} \cdot \lceil \frac{DW_{\text{AXI}}}{DW_{\text{HYPER}}} \rceil \cdot t_{\text{CK}}^{\text{HRAM}}
\end{equation}

We now have all the elements to bound the overall service time of the whole main memory subsystem in case of a miss (ii) with refill and (iii) eviction and refill.
First, we bound the service time to serve a refill (read) request.
A $\beta_i$-long transaction is split by the LLC into $\lceil \beta_i/LW_{\text{LLC}} \rceil$ sub-transactions to the memory, each $LW_{\text{LLC}}$-long.
Therefore, by multiplying the control time of each sub-transaction ($t_{\text{CTRL,R}}^{\text{HMC}} + t_{\text{CTRL,R}}^{\text{HRAM}})$ by the number of transactions issued ($\lceil\frac{\beta_i}{LW_{\text{LLC}}}\rceil$), we bound the control time introduced by the memory controller and the off-chip memories.
To this, we sum the control time of the LLC in case of a miss ($t_{\text{CTRL,W/R}}^{\text{MS-MISS}}$) and obtain the whole control overhead.
The same reasoning applies to the data time: the total number of values requested by the LLC to the memory will be equal to $LW_{\text{LLC}} \cdot \lceil \frac{\beta_i}{LW_{\text{LLC}}}\rceil $ and the overall time spent reading $LW_{\text{LLC}} \cdot \lceil \frac{\beta_i}{LW_{\text{LLC}}}\rceil t_{\text{DATA,R/W}}^{\text{HRAM}}$.
It follows that the time to serve one word is $\frac{LW_{\text{LLC}}}{\beta_i} \cdot \lceil \frac{\beta_i}{LW_{\text{LLC}}}\rceil \cdot t_{\text{DATA,R/W}}^{\text{HRAM}}$.
Summing it with the data time of the LLC ($t_{\text{DATA,R/W}}^{\text{MS-HIT}}$), we obtain the following upper bounds for case (ii):
\begin{equation}\label{eq:miss-refill}
\footnotesize
\begin{split}
    & t_{\text{CTRL,R/W}}^{\text{MS-MISS-REF}} = t_{\text{CTRL,R}}^{\text{LLC-MISS}} +  \left\lceil\frac{\beta_i}{LW_{\text{LLC}}}\right\rceil \cdot ( t_{\text{CTRL,R}}^{\text{HMC}} + t_{\text{CTRL,R}}^{\text{HRAM}} ) ; \\
    & t_{\text{DATA,R/W}}^{\text{MS-MISS-REF}} = t_{\text{DATA,R/W}}^{\text{MS-HIT}} +  \frac{LW_{\text{LLC}}}{\beta_i} \cdot \left\lceil\frac{\beta_i}{LW_{\text{LLC}}}\right\rceil \cdot t_{\text{DATA,R}}^{\text{HRAM}} ; 
    \end{split}
\end{equation}
If the eviction is also required, $\lceil \frac{\beta_i}{LW_{\text{LLC}}} \rceil$ extra write transactions of length $\beta_i$ are performed to save the evicted data.
Following the same reasoning as earlier, this batch of transactions will introduce $\lceil\frac{\beta_i}{LW_{\text{LLC}}}\rceil  (t_{\text{CTRL,W}}^{\text{HMC}} + t_{\text{CTRL,W}}^{\text{HRAM}})$ clock cycles to the control time and $\frac{LW_{\text{LLC}}}{\beta_i} \cdot\lceil\frac{\beta_i}{LW_{\text{LLC}}}\rceil \cdot t_{\text{DATA,W}}^{\text{HRAM}}$ to the data time.
We sum these numbers to eq. \ref{eq:miss-refill} to upper bound the overall control and data time as follows:
\begin{equation}
\footnotesize
\begin{split}
    t_{\text{CTRL,W/R}}^{\text{MS-MISS-REF-EV}} = t_{\text{CTRL,W/R}}^{\text{MS-MISS-REF}} +  \left\lceil\frac{\beta_i}{LW_{\text{LLC}}}\right\rceil  ( t_{\text{CTRL,W}}^{\text{HMC}} + t_{\text{CTRL,W}}^{\text{HRAM}} ) ; \\
    t_{\text{DATA,W/R}}^{\text{MS-MISS-REF-EV}} = t_{\text{DATA,W/R}}^{\text{MS-MISS-REF}} +  \frac{LW_{\text{LLC}}}{\beta_i} \cdot \left\lceil\frac{\beta_i}{LW_{\text{LLC}}}\right\rceil \cdot t_{\text{DATA,W}}^{\text{HRAM}} ; \\
\end{split}
\end{equation}

Consider now the parallelism of the main memory subsystem. This is defined by the LLC, which acts as an interface with the rest of the platform, buffering up to $D_{\text{FIFO}}^{\text{LLC}}$ read and write transactions. This means that the maximum number of supported outstanding transactions is as follows:
\begin{equation}
\footnotesize
\chi_R^{MS} = \chi_W^{MS} = D_{\text{FIFO}}^{\text{LLC}}
\end{equation}
The LLC is pipelined: in the case all the enqueued accesses are hits, there is no stalling.
However, the memory controller handles only one transaction at a time, stalling the preceding ones, and only serves one read or one write at a time.
Hence, as soon as one access is a miss, $\rho^{\text{MS}} = 0$ and $\theta^{\text{MS}}=0$.

\subsection{AXI host crossbar}\label{analysis:xbar}

The AXI host crossbar under analysis is a consolidated AXI crossbar already validated in multiple silicon tapeouts~\cite{shaheen},\cite{cheshire}, \cite{Kurth_2021}.
We apply here the generic model for the \emph{crossbar} proposed in Section~\ref{model:router}. 
The crossbar is referred as $R_0$.

\subsubsection{RTL IP structure}

As detailed in Figure~\ref{fig:xbar_uarch}, the crossbar exports a set of input subordinate ports (S) and output manager ports (M).
Each S port is connected to a demultiplexer, which routes the incoming AW and AR requests and W data to the proper destination. 
Each M port is connected to a multiplexer, which (i) arbitrates AW and AR requests directed to the same \emph{peripheral}, (ii) connects the selected W channel from the \emph{controller} to the \emph{peripheral}, and (iii) routes back the R read data and B write responses.
The crossbar under analysis can be configured for a fully combinatorial (i.e., decoding and routing operations in one clock cycle) or pipelined structure with up to three pipeline stages.
In the platform under analysis, it is configured to be fully combinatorial.

\subsubsection{Delays and parallelism analysis} 
To analyze the maximum propagation delays introduced by the crossbar, we upper bound the overall latency on a transaction by combining the delays introduced on each AXI channel.
%
%
We provide two upper bounds, one for transactions in isolation (i.e., $t_{\text{PROP,R/W}}^{R_0}$ as defined in Section \ref{sec:model}) and the other for transactions under contention (i.e., $t_{\text{PROP,R/W}}^{R_0} + t_{\text{CON,R/W}}^{R_0}$ as defined in Section \ref{sec:model}). We will use both of them in our architectural analysis reported in Section~\ref{sec:system-level-analysis}.

\textit{Maximum delays in isolation:} Thanks to the combinatorial structure, it is guaranteed by design that a request for a transaction, a data word, or a write response crosses the crossbar in one clock cycle ($t_\text{CK}^{{R_0}}$). 
Consider a whole AXI transaction.
For a read transaction, the crossbar is crossed twice: on the AR and R AXI channels, respectively.
For each AXI write transaction, the crossbar is crossed two times: the first time is crossed by the AW and W beats (propagated in parallel), and the second time by the B response. 
Thus, the propagation delays in isolation are equal to:
\begin{equation}\label{eq:xbar_prop}
\footnotesize
    t_{\text{PROP,R/W}}^{{R_0}} = 2 \cdot t_{\text{CK}}^{{R_0}};
\end{equation}
\begin{figure}[t]
    \centering
    \includegraphics[width=0.9\linewidth]{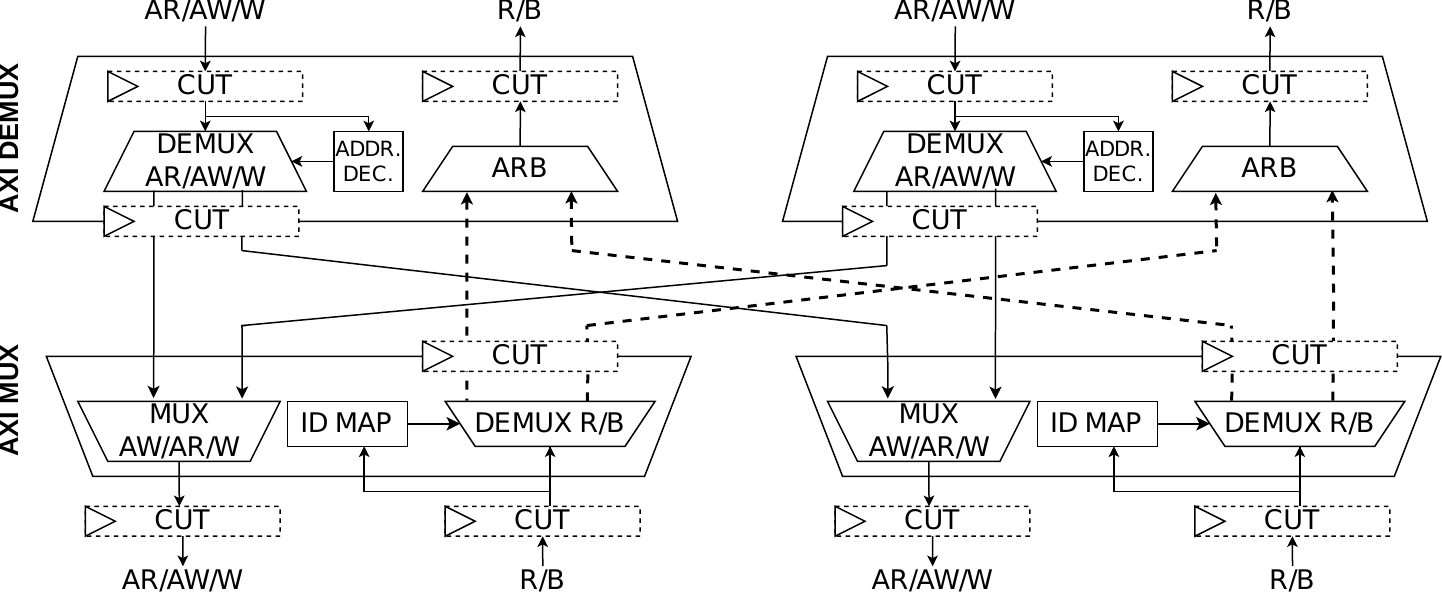}
    \caption{AXI Crossbar block diagram}
    \label{fig:xbar_uarch}
\end{figure}
\textit{Maximum delays under contention:}
Under contention, multiple \emph{controllers} connected to the crossbar can attempt to concurrently send requests to the same \emph{peripheral}, generating interference.
The arbiters deploy a round-robin scheme capable of granting one AW and one AR request for each clock cycle.
In the worst-case scenario, the request under analysis loses the round-robin and is served last, experiencing a delay of $M_{{R_0}}-1$ clock cycles (with $M_{{R_0}}$ the number of \emph{controller} capable of interfering with the request under analysis).
%
%
%
%
%
From the previous considerations, the maximum propagation time introduced by the crossbar is upper bound by:
\begin{equation}\label{eq:xbar_contention}
\footnotesize
 t_{\text{CON,R}}^{{R_0}}  = t_{\text{CON,W}}^{{R_0}} = M_{{R_0}}-1
\end{equation}

Consider now the parallelism. 
Concerning reads, the crossbar does not keep track of the inflight transactions. To route the responses back, it appends information to the AXI ID. Doing so does not limit the maximum number of outstanding transactions.
The behavior is different for writes: AXI enforces a strict in-order execution of write transactions (see ~\cite{ARMAXI} p. 98).
This requires the crossbar to implement a table to know the order of granted transactions.
The maximum number of outstanding write transactions per S port is limited by the depth of such tables, refereed as $D_{\text{TAB}}^{{R_0}}$. 
From the previous consideration: $\chi_{W}^{{R_0}}=D_{\text{TAB}}^{{R_0}}$. 
In the architecture under analysis, $\chi_{W}^{{R_0}}$ is set to be bigger than the parallelism supported by the \emph{peripherals} so that the crossbar does not limit the overall parallelism of the system.

\section{System-level worst-case response time analysis}\label{sec:system-level-analysis}

This section introduces our system-level end-to-end analysis to upper bound the overall response times of read and write transactions issued by a generic \emph{controller} and directed to a generic \emph{peripheral}, considering the maximum interference generated by the other \emph{controllers} in the system. Our approach is static\cite{thewcetproblem} and compositional\cite{mitraDandT}. Specifically, we leverage the component-level static analysis introduced in Section \ref{sec:wcea} to then compose, step-by-step, the system-level worst-case service time of transactions traversing the whole architecture.
%

%
We make an assumption aligned with the SoA~\cite{restuccia2020axi,restuccia2020modeling,hassanzoni,jiang2022bluescale,hassan,mirosanlou2020mcsim} to ensure independence among \emph{peripherals} while not compromising the generality of the analysis. It is assumed that multiple outstanding transactions of the same type (either read or write) issued by the same \emph{controller} target the same \emph{peripheral}: before issuing a transaction targeting a \emph{peripheral} $P_j$, a \emph{controller} completes the pending transactions of the same type targeting a different \emph{peripheral} $P_z$.
Without such an assumption, due to the strict ordering imposed by the AXI standard~\cite{ARMAXI} on the W channel, and the structure of some \emph{peripherals} generating interference between reads and writes (i.e., $\rho^{P_j}=0$), transactions issued by $C_k$ and directed to $P_j$ might interfere with transactions issued by $C_i$ and directed to $P_z$, if $C_i$ also issues in parallel transactions to $P_j$, and vice-versa. 
This assumption allows us to relax our analysis, removing such pathological cases. It is worth noticing that it does not enforce any relationship between read and write transactions.
Such an assumption can either be enforced at the software level or at the hardware level.
The results of our analysis can be extended to such corner cases if required.
We leave this exploration for future works. 

The first step of the analysis is to bound the overall response time of a transaction in isolation (Lemma~\ref{lemma:st_single_iso}).
Secondly, we bound the maximum number of transactions that can interfere with a transaction under analysis, either of the same type (e.g., reads interfering with a read, Lemma~\ref{lemma:max_num_intf_read}) or of a different type (e.g., write interfering with a read, and vice versa, Lemma~\ref{lemma:intf_diff_kind}).
Lemma~\ref{lemma:max_intf_1trx} bounds the maximum temporal delay each interfering transaction can delay a transaction under analysis. 
Finally, Theorem~\ref{theorem:max_intf} combines the results of all the lemmas to upper bound the overall worst-case response time of a transaction under analysis under interference.
We report the lemmas in a general form.
$AX_{i,j}$ can represent either a read or write transaction issued by the generic \emph{controller} $C_i$ and directed to the generic \emph{peripheral} $P_j$. The \emph{crossbar} is referred to as $R_0$. To make our analysis general, we assume that $\Psi_j = [C_0, ..., C_{M-1}]$ is the generic set of interfering \emph{controllers} capable of interfering with $C_i$ issuing transactions to $P_j$ and that that a generic set of \emph{bridges} $\Theta_{i} = \{Q_0, ..., Q_{q-1} \}$ can be present between each \emph{controller} $C_i$ and the crossbar $R_0$. 
The cardinality of $\Psi_j$ is referred to as $\bigm|\Psi_j\bigm|$ and corresponds to the number of \emph{controllers} interfering with $AX_{i,j}$.

\begin{lemma}\label{lemma:st_single_iso}
    The response time in isolation of $AX_{i,j}$ is upper bounded by:
    \begin{equation}\label{eq:one_trx_iso}
    \footnotesize
    d^{X}_{i,j} = d_{\text{R/W}}^{P_j}  + \sum_{Q_l \in \Theta_{i,j}} d_{\text{R/W}}^{Q_l} + d_{\text{R/W}}^{R_0} 
    \end{equation}
\end{lemma}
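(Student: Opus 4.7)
The plan is to track the transaction $AX_{i,j}$ as it traverses the platform end-to-end from $C_i$ to $P_j$ and back, summing the per-component latencies derived in Section \ref{sec:wcea}. Because the transaction is in isolation (no other controller competes for any shared resource along the path), every contention or interference term vanishes and only the intrinsic propagation/service delay of each traversed component remains.

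First I would enumerate the components on the forward path: the chain of bridges in $\Theta_{i,j}$ between $C_i$ and the crossbar, the crossbar $R_0$, and finally the peripheral $P_j$. The return paths (read data on R, or write response on B) do not require separate accounting because each component's $d^{\cdot}_{R/W}$ is defined in Section \ref{sec:model} to cover the full traversal in both directions; for example, the CDC FIFO delay in Equation \ref{eq-delay-CDC} already sums the forward and reverse crossings. By the bridge model of Section \ref{model:bridge}, each $Q_l \in \Theta_{i,j}$ contributes a fixed delay $d^{Q_l}_{R/W}$ with no contention or parallelism cap, so the bridges collectively add exactly $\sum_{Q_l \in \Theta_{i,j}} d^{Q_l}_{R/W}$.

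Second, I would invoke the crossbar model of Section \ref{model:router}: in isolation the number of concurrently contending controllers is zero, so the contention term $t^{R_0}_{\text{CON,R/W}}$ defined by Equation \ref{eq:xbar_contention} collapses to $0$, leaving $d^{R_0}_{R/W} = t^{R_0}_{\text{PROP,R/W}}$ as in Equation \ref{eq:xbar_prop}. Analogously, by Section \ref{model:peripheral} the peripheral's isolated service time is already captured by $d^{P_j}_{R/W} = t^{P_j}_{\text{CTRL}} + t^{P_j}_{\text{DATA}} \cdot \beta_i$, and no interference from other controllers can extend it under the isolation assumption. Adding the three contributions yields the claimed upper bound.

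The only non-routine step is justifying that the end-to-end response time is upper bounded by the \emph{sum} of per-component delays, i.e.\ that traversals of distinct components do not overlap in a way that would invalidate additivity. This follows from the serial composition of the datapath: each component's $d^{\cdot}$ was defined to bound the entire time a transaction (address phase plus data phase plus, where applicable, response phase) occupies that component, and because a transaction cannot begin being processed downstream before the relevant beat has left the upstream component, in the worst case these intervals are disjoint and concatenated. I expect this additivity argument to be the main, albeit minor, obstacle; once it is stated explicitly the rest is a direct summation of bounds that were already established component by component in Section \ref{sec:wcea}.
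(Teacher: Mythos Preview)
Your proposal is correct and follows essentially the same approach as the paper: both argue that the per-component delays from Section~\ref{sec:wcea} already account for all phases of a transaction, that the components are independent/serially composed, and that therefore the end-to-end isolated response time is bounded by the sum of the three contributions. Your write-up is somewhat more explicit than the paper's own proof---in particular your observation that isolation zeroes out $t_{\text{CON}}^{R_0}$ and your justification of additivity via serial composition---but the underlying argument is the same.
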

\begin{proof}
    Section~\ref{sec:wcea} upper bounds the worst-case delays in isolation introduced by each component in the platform. According to their definition, such delays account for all of the phases of the transaction. The components in the platform are independent of each other. Thus, the delay introduced by each traversed component is independent of the behavior of the other components.
    It derives that the overall delay incurred in traversing the set of components between $C_i$ and $P_j$ is upper bounded by the sum of the worst-case delays introduced by all of the components in the set. 
    Summing up the maximum delay introduced by the target \emph{peripheral} $P_j$ ($d_{\text{R/W}}^{P_j}$), by the set of traversed \emph{bridges} $\Theta_{i}$, and by the \emph{crossbar} $R_0$ ($d_{\text{R/W}}^{R_0}$), the lemma derives.
\end{proof}

\begin{lemma}\label{lemma:max_num_intf_read}
    The maximum number of transactions of the same type that can interfere with $AX_{i,j}$ is upper bounded by:    
\begin{equation}\label{read_with_read}
\footnotesize
S^{X}_{i,j} = min\left(\sum_{C_y \in \Psi_j} \phi_X^{C_y}, \chi_X^{P_j} + \bigm|\Psi_j\bigm| \right)
\end{equation}
\end{lemma}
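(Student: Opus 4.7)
The plan is to prove the bound by deriving two independent upper bounds on the number of same-type interfering transactions and then taking their minimum; since each is a valid cap, the minimum is also valid and yields exactly~\eqref{read_with_read}.

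\emph{Controller-side bound.} First, I would appeal to the controller model of Section~\ref{model:controller}: each interfering $C_y \in \Psi_j$ can keep at most $\phi_X^{C_y}$ transactions of type $X$ in flight simultaneously. Hence, at every instant, the total number of same-type interfering transactions present anywhere between the interfering controllers and $P_j$ is at most $\sum_{C_y \in \Psi_j} \phi_X^{C_y}$, which gives the first term of the minimum. This argument does not depend on how the downstream fabric buffers or arbitrates — it only uses the issue-capacity constraint at the source.

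\emph{Peripheral-side bound.} Second, I would bound interference using the downstream queueing capacity. By the peripheral model of Section~\ref{model:peripheral}, $P_j$ holds at most $\chi_X^{P_j}$ outstanding same-type transactions; therefore, at the moment $AX_{i,j}$ contends for admission, at most $\chi_X^{P_j}$ same-type transactions from $\Psi_j$ can already be queued inside $P_j$. While $AX_{i,j}$ waits for the crossbar's round-robin arbiter (Section~\ref{analysis:xbar}) to grant it, each interferer can have at most one further same-type request granted ahead of $C_i$: once a controller is served, the round-robin policy cannot serve it again before every remaining contender — including $C_i$ — has had a turn. This contributes at most $|\Psi_j|$ extra interfering transactions, producing the second term $\chi_X^{P_j} + |\Psi_j|$. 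Combining the two bounds gives~\eqref{read_with_read}.

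\emph{Main obstacle.} The most delicate step is justifying the ``$+\,|\Psi_j|$'' in the second bound: I need to combine the round-robin fairness of the crossbar with the system-level assumption (introduced at the start of Section~\ref{sec:system-level-analysis}) that multiple in-flight same-type transactions from a controller target the same peripheral. Without the latter, an interferer could drain pending requests to a different subordinate while injecting fresh ones at $P_j$, breaking the one-per-interferer count. With the assumption in place, any newly granted interfering request must also target $P_j$, so once $AX_{i,j}$ is waiting at the arbiter each interferer can advance by at most one additional grant before $AX_{i,j}$ is served, and the second bound holds.
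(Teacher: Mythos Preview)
Your proposal is correct and follows essentially the same two-bound-then-minimum structure as the paper: a controller-side bound from the outstanding-transaction caps $\phi_X^{C_y}$, and a peripheral-side bound combining the buffering limit $\chi_X^{P_j}$ with the round-robin loss against $\lvert\Psi_j\rvert$ interferers. The paper's argument for the second term is phrased slightly differently (it frames it as $P_j$ being full, then $C_i$ losing every round-robin tie as slots free up) but the content is the same; your extra ``main obstacle'' paragraph invoking the same-peripheral assumption is more explicit than the paper, which leaves that dependency implicit in the setup of Section~\ref{sec:system-level-analysis}.
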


\begin{proof}
    The $\min$ in the formula has two components.
    As from the AXI standard definition, an interfering \emph{controller} $C_k$ cannot have more than $\phi_{X}^{C_k}$ pending outstanding transactions. This means that summing up the maximum number of outstanding transactions for each interfering \emph{controller} in $\Psi_j$ provides an upper bound on the number of transactions of the same type interfering with $AX_{i,j}$ -- the left member of the $\min$ derives.
    From our \emph{peripheral} analysis reported in Section~\ref{sec:wcea}, $P_j$ and $R_0$ can limit the maximum amount of transactions accepted by the system: $P_j$ accepts overall at most $\chi_{R/W}^{P_j}$ transactions -- when such a limit is reached, any further incoming transaction directed to $P_j$ is stalled.
    After $P_j$ serves a transaction, $R_0$ restarts forwarding transactions to the \emph{peripheral} following a round-robin scheme (see Section~\ref{sec:wcea}).
    In the worst-case scenario, $C_i$ loses the round-robin arbitration against all of the $\bigm|\Psi_j\bigm|$ interfering \emph{controllers} in $\Psi_j$, each ready to submit an interfering request. 
    Summing up the contributions, also $\chi_{R}^{P_j} + \bigm|\Psi_j\bigm|$ upper bounds the maximum number of transactions interfering with $AX_{i,j}$ -- the right member of the $\min$ derives.
    Both of the bounds are valid -- the minimum between them is an upper bound providing the least pessimism -- Lemma \ref{lemma:max_num_intf_read} derives.
\end{proof}

\begin{lemma}\label{lemma:intf_diff_kind}
The maximum number of transactions of a different type (represented here as Y, i.e., write transactions interfering with a read under analysis, and vice versa) interfering with $AX_{i,j}$ is upper bounded by: 
%
\begin{equation}\label{write_with_read}
\footnotesize
        U^Y_{i,j} =  ( S^X_{i,j} + 1 ) \cdot ( 1 - \theta^{P_j} ) \\
\end{equation}
\end{lemma}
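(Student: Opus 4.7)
The plan is to split on the value of $\theta^{P_j}$. When $\theta^{P_j}=1$, the peripheral $P_j$ processes reads and writes through independent paths by definition, so an opposite-type transaction cannot delay $AX_{i,j}$; the factor $(1-\theta^{P_j})=0$ then collapses $U^Y_{i,j}$ to zero, matching this observation directly. The substantive work lies in the case $\theta^{P_j}=0$, where reads and writes share the peripheral's serving resources and can block one another, and the bound must be shown to equal $S^X_{i,j}+1$.

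For that case, I would mirror the reasoning of Lemma~\ref{lemma:max_num_intf_read} with the type label flipped from X to Y. Each interfering controller $C_y \in \Psi_j$ can hold at most $\phi_Y^{C_y}$ outstanding opposite-type transactions by the AXI parallelism bound, while the peripheral together with the round-robin stage of $R_0$ admit at most $\chi_Y^{P_j} + \bigm|\Psi_j\bigm|$ opposite-type transactions concurrently (the peripheral's opposite-type buffer, plus one per interferer that may lose arbitration and still be pending at the crossbar). The minimum of these two quantities is exactly the structural form of $S^X_{i,j}$ instantiated with Y in place of X, so the same expression can be used as a valid bound on the opposite-type interferers coming from $\Psi_j$.

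The remaining $+1$ accounts for an opposite-type transaction originating from $C_i$ itself. The independence assumption stated at the start of Section~\ref{sec:system-level-analysis} constrains only same-type outstanding transactions from $C_i$ to target the same peripheral; it does not forbid $C_i$ from having an opposite-type transaction in flight toward $P_j$. In the worst case, exactly one such transaction is already queued at $P_j$ when $AX_{i,j}$ is issued and must complete before $AX_{i,j}$ can be served, contributing a single extra interferer. Summing the two contributions and multiplying by $(1-\theta^{P_j})$ yields the stated upper bound.

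The main obstacle is notational rather than conceptual: the formula uses $S^X_{i,j}$ instead of a dedicated $S^Y_{i,j}$, so the care in the proof is to justify that Lemma~\ref{lemma:max_num_intf_read} is derivation-symmetric in the type label and therefore its value legitimately serves as an upper bound on opposite-type interference produced by $\Psi_j$. I expect most of the proof's effort to go into making this symmetry explicit and into cleanly isolating the ``$+1$'' as the contribution from $C_i$'s own permitted opposite-type transaction, rather than from any behavior of the interfering controllers.
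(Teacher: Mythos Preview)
Your approach diverges from the paper's and has a genuine gap. You try to bound the opposite-type interferers directly by replaying Lemma~\ref{lemma:max_num_intf_read} with the type label switched to $Y$, which naturally yields
\[
S^Y_{i,j} = \min\Bigl(\sum_{C_y \in \Psi_j} \phi_Y^{C_y},\ \chi_Y^{P_j} + \msize{\Psi_j}\Bigr),
\]
not $S^X_{i,j}$. You then call the discrepancy ``notational,'' but it is not: $\phi_Y^{C_y}$ and $\chi_Y^{P_j}$ need not coincide with $\phi_X^{C_y}$ and $\chi_X^{P_j}$, so $S^Y_{i,j}$ and $S^X_{i,j}$ are in general different numbers. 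Your argument therefore does not establish the bound the lemma actually claims.

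The paper's proof uses a different mechanism that you have missed entirely: the round-robin arbiter \emph{inside the peripheral} that alternates between reads and writes when $\theta^{P_j}=0$. Because that arbiter grants at most one opposite-type transaction between any two consecutive same-type ones, the number of opposite-type transactions that can delay $AX_{i,j}$ is tied to the number of same-type transactions already queued ahead of it---namely $S^X_{i,j}$---plus one more slot immediately before $AX_{i,j}$ itself. That interleaving argument is what makes $S^X_{i,j}$ (and not $S^Y_{i,j}$) appear in the bound. Your explanation of the ``$+1$'' as $C_i$'s own opposite-type transaction is also not the paper's reasoning; the extra unit comes from $AX_{i,j}$ losing one round of the peripheral-level read/write arbitration, not from a self-issued transaction.
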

\begin{proof}
    %
    According to Section~\ref{analysis:xbar}, $R_0$ manages transactions of different types independently -- thus, no interference of this type is generated at the $R_0$ level.
    From Section \ref{sec:model}, $\theta^{P_j}=1$ represents the case in which the \emph{peripheral} is capable of serving read and write transactions in parallel (e.g., the SPM \emph{peripheral}, Section \ref{ssec:l2spm}). Thus, no interference is generated among them -- the second equation derives.
    From Section \ref{sec:model}, $\theta^{P_j}=0$ represents the case in which $P_j$ does not feature parallelism in serving read and write transactions (i.e., also write transactions interfere with reads, e.g., main memory subsystem, Section ~\ref{analysis:ms}). 
    Considering lemma~\ref{lemma:max_num_intf_read}, at most $S^X_{i,j}$ transactions of the same type can interfere with $AX_{i,j}$. 
    With $\theta^{P_j}=0$, and assuming a round-robin scheme arbitrating between reads and writes at the \emph{peripheral} level, each one of the $S^X_{i,j}$ interfering transaction of the same type can be preceded by a transaction of the opposite type, which can, therefore, create interference.
    The same applies to $AX_{i,j}$, which can lose the arbitration at the \emph{peripheral} level as well. 
    Summing up the contribution, it follows that $S^X_{i,j} + 1$ can overall interfere with $AX_{i,j}$ -- the first equation derives. 
\end{proof}

\begin{lemma}\label{lemma:max_intf_1trx}
    The maximum time delay that a transaction of any kind $AX_{k,j}$ issued by the generic interfering \emph{controller} $C_k$ can cause on $AX_{i,j}$ is upper bounded by:
    %
    %
\begin{equation}\label{eq:max_intf_1trx}
\footnotesize
     \Delta_{k,j} = d_{\text{R/W}}^{R_0} + (1 - \rho^{P_j}) \cdot t_{\text{CTRL,R/W}}^{P_j} + t_{\text{DATA,R/W}}^{P_j}\cdot \beta_k 
\end{equation}
\end{lemma}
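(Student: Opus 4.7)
My strategy is to decompose the worst-case delay that a single interferer $AX_{k,j}$ can inject into the response time of $AX_{i,j}$ into contributions coming from the two shared stages traversed by both transactions: the crossbar $R_0$ and the target peripheral $P_j$. Any bridge $Q_l \in \Theta_i$ on the path of $AX_{i,j}$ exports a single manager and a single subordinate interface (Section \ref{model:bridge}) and therefore cannot be shared with $C_k$; interfering controllers join the path of $AX_{i,j}$ only at $R_0$. Hence no delay chargeable to $AX_{k,j}$ can originate outside of $R_0$ and $P_j$, and it suffices to bound those two contributions and add them.

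First I would bound the crossbar contribution. If $AX_{k,j}$ wins arbitration at the manager port of $R_0$ directed to $P_j$ before $AX_{i,j}$, then $AX_{i,j}$ is stalled at the subordinate port for as long as $AX_{k,j}$ occupies the crossbar path, which by Section \ref{analysis:xbar} is upper bounded by $d_{\text{R/W}}^{R_0}$ clock cycles. This accounts for the first term of \eqref{eq:max_intf_1trx}. Next I would bound the peripheral contribution by separating the control overhead from the data phase. By the definition of $\rho^{P_j}$ in Section \ref{model:peripheral}, when $\rho^{P_j}=1$ all stages of $P_j$ are pipelined, so the control overhead $t_{\text{CTRL,R/W}}^{P_j}$ of $AX_{k,j}$ overlaps with the processing of subsequent transactions and adds zero to the delay of $AX_{i,j}$; when $\rho^{P_j}=0$ the peripheral serves at most one transaction at a time, so the entire $t_{\text{CTRL,R/W}}^{P_j}$ of the interferer blocks $AX_{i,j}$. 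This yields the factor $(1-\rho^{P_j})$ in front of $t_{\text{CTRL,R/W}}^{P_j}$. Regardless of pipelining, the data channel (R for reads, W for writes) is a single physical resource that carries one beat per $t_{\text{DATA,R/W}}^{P_j}$ cycle, so the $\beta_k$ beats of $AX_{k,j}$ consume $t_{\text{DATA,R/W}}^{P_j}\cdot\beta_k$ cycles of exclusive use, during which the beats of $AX_{i,j}$ cannot progress. Summing the three contributions yields \eqref{eq:max_intf_1trx}.

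The main obstacle is the data-phase argument: I must rule out that beats of $AX_{i,j}$ make progress while those of $AX_{k,j}$ are being served. For writes, this is enforced directly by the AXI specification, which requires strict in-order propagation on the W channel (cf.\ Section \ref{sec:model}); for reads, it follows from the common same-order R-channel policy adopted by the platforms under analysis and recalled in Section \ref{sec:model}, together with the standing assumption preceding the lemmas that outstanding same-type transactions issued by one controller target the same peripheral, which prevents the peripheral from interleaving data of independent targets. A secondary concern is potential double counting between $d_{\text{R/W}}^{R_0}$ and the peripheral-side terms; I would justify separability by noting that the crossbar stall is paid before $AX_{i,j}$ is admitted to $P_j$, whereas the $t_{\text{CTRL,R/W}}^{P_j}$ and $t_{\text{DATA,R/W}}^{P_j}\cdot\beta_k$ terms accrue only after the interferer has been admitted, so the three contributions cover disjoint time intervals of $AX_{i,j}$'s lifetime and their sum is a valid upper bound.
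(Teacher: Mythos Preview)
Your proposal is correct and follows essentially the same decomposition as the paper: identify $R_0$ and $P_j$ as the only shared stages (bridges being single-manager/single-subordinate), bound the crossbar contribution by $d_{\text{R/W}}^{R_0}$, and split the peripheral contribution into a control-overhead term gated by $(1-\rho^{P_j})$ plus the unavoidable data-phase term $t_{\text{DATA,R/W}}^{P_j}\cdot\beta_k$. Your additional justification for non-interleaving of data beats and for the disjointness of the three time intervals is sound and in fact more explicit than the paper's own argument.
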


\begin{proof}
    In traversing the path between $C_k$ and $P_j$, $AX_{k,j}$ shares a portion of the path with $AX_{i,j}$, i.e., the target \emph{peripheral} $P_j$ and the crossbar $R_0$ -- no \emph{bridges} from $\Theta_{k}$ belongs to the shared path, thus the delay propagation of $AX_{k,j}$ do not contribute in delaying $AX_{k,j}$.
    Considering the delay generated by $AX_{k,j}$ at $R_0$, this is upper bounded by $d_{\text{R/W}}^{R_0}$ in Section~\ref{model:router}.
    As from Section~\ref{model:peripheral}, $t_{\text{CTRL,R/W}}^{P_j} + t_{\text{DATA,R/W}}^{P_j} \cdot \beta_k$ is the maximum service time of $P_j$ for the transaction $AX_{k,j}$ and upper bounds the maximum temporal delay that $AX_{k,j}$ can cause on $AX_{i,j}$ at $P_j$.
    As from the definition of an interfering transaction, $AX_{k,j}$ is served by $P_j$ before $AX_{i,j}$.
    As defined by the model in Section~\ref{model:peripheral}, when $\rho^{P_j}=1$, the \emph{peripheral} works in a pipeline fashion.
    This means that for $\rho^{P_j}=1$, the control time $t_{\text{CTRL,R/W}}^{P_j}$ of an interfering transaction is pipelined and executed in parallel with the transaction under analysis.
    Differently, when $\rho^{P_j}=0$, no pipeline is implemented, and the control time of the interfering transaction can partially or totally interfere with the transaction under analysis. 
    From the previous considerations, the contribution $(1 - \rho^{P_j}) \cdot t_{\text{CTRL,R/W}}^{P_j} + t_{\text{DATA,R/W}}^{P_j}\cdot \beta_k$ derives.
    Summing up the contributions, the lemma follows.
\end{proof}

\begin{theorem}\label{theorem:max_intf}
        %
        The overall response time of $AX^{i,j}$ under the interference generated by the other \emph{controllers} in the system is upper bounded by:
        \begin{equation}\label{eq:theorem}
\footnotesize
H^X_{i,j} = d^X_{i,j} + (S^X_{i,j} + U^Y_{i,j}) \cdot \Delta_{k,j}
        \end{equation}    
\end{theorem}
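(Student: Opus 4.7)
The plan is to assemble Theorem~\ref{theorem:max_intf} as a straightforward composition of the four preceding lemmas, treating the isolation cost and the interference cost as two additive contributions.

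First I would set up the decomposition $H^X_{i,j} = (\text{isolated service time of } AX_{i,j}) + (\text{total delay due to interferers})$. The first summand is immediately supplied by Lemma~\ref{lemma:st_single_iso}, which gives $d^X_{i,j}$; the argument for this part is that the path traversed by $AX_{i,j}$ (the set of bridges in $\Theta_i$, the crossbar $R_0$, and the target peripheral $P_j$) contributes additively because each component's worst-case delay, as established in Section~\ref{sec:wcea}, is independent of the others. So no new work is needed for this piece.

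Next I would bound the aggregate interference. The idea is to enumerate every transaction that can possibly stall $AX_{i,j}$ and charge each one the worst-case per-transaction delay $\Delta_{k,j}$ from Lemma~\ref{lemma:max_intf_1trx}. Lemma~\ref{lemma:max_num_intf_read} upper-bounds the number of same-type interferers by $S^X_{i,j}$, and Lemma~\ref{lemma:intf_diff_kind} upper-bounds the number of opposite-type interferers by $U^Y_{i,j}$. Since $\Delta_{k,j}$ was derived \emph{independently} of which controller $C_k$ issues the interfering transaction, the same bound applies to every interferer, and the total interference delay is at most $(S^X_{i,j} + U^Y_{i,j}) \cdot \Delta_{k,j}$. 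Adding this to $d^X_{i,j}$ yields the stated bound.

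The main obstacle, and where I would spend the most care, is the additivity argument: one must verify that the per-interferer delays $\Delta_{k,j}$ indeed compose by summation rather than interacting in some subtler way. The delicate points are (i) the shared-resource assumption stated just before Lemma~\ref{lemma:st_single_iso} (all outstanding transactions of a given type from the same controller target the same peripheral) rules out the pathological cross-peripheral blocking induced by strict W-channel ordering and non-pipelined peripherals; (ii) bridges in $\Theta_i$ carry no interference since by assumption they are not on any interferer's path; and (iii) at $R_0$ and $P_j$ the worst-case scenario is precisely the one in which $AX_{i,j}$ loses arbitration to all counted interferers in sequence, so stacking $\Delta_{k,j}$ values is attainable (hence tight up to the pessimism in each lemma) rather than an overcount. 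Once these three observations are laid out, the theorem follows by substitution of the lemma bounds into the decomposition above.
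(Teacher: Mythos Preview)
Your proposal is correct and follows essentially the same approach as the paper: decompose the response time into the isolation cost from Lemma~\ref{lemma:st_single_iso} plus the aggregate interference, obtained by multiplying the total interferer count from Lemmas~\ref{lemma:max_num_intf_read} and~\ref{lemma:intf_diff_kind} by the per-transaction delay bound of Lemma~\ref{lemma:max_intf_1trx}. Your additional discussion of why the additivity is justified (the shared-peripheral assumption, bridges being off the interferers' shared path, and worst-case sequential arbitration loss) goes beyond what the paper's proof explicitly states but is entirely in line with its reasoning.
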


\begin{proof}
Summing up the contribution in isolation for $AX_{i,j}$ (Lemma~\ref {lemma:st_single_iso}) with the sum of the maximum number of interfering transactions of the same type (Lemma~\ref{lemma:max_num_intf_read}) and of a different type (Lemma~\ref{lemma:intf_diff_kind}) multiplied by the maximum delay generated by each interfering transaction (Lemma~\ref{lemma:max_intf_1trx}), Theorem \ref{theorem:max_intf} derives.
\end{proof}

The results presented in this Section represent analytical upper bounds derived through static code analysis and the formulation of mathematical proofs. 
Section \ref{sec:exps} will validate them through a comprehensive set of cycle-accurate experiments and measurements.

\section{Experimental validation}\label{sec:exps}

This Section describes the experimental campaign we conducted to validate the methodology and models.
The aim of the experimental campaign is to assess that the results presented in the previous Sections correctly upper bound the maximum delays and response times at the component level and the architectural level.
We follow a hierarchical approach: at first, Section~\ref{ss:exper-IP-latencies} aims to validate the results at the component level we proposed in Section~\ref{sec:wcea}.
Following, in Section~\ref{ss:exper-system}, we experimentally validate the system-level analysis we proposed in Section~\ref{sec:system-level-analysis}. 
The experiments are conducted in a simulated environment (leveraging the Siemens QuestaSIM simulator) and by deploying the design on an FPGA platform.
In the simulated experiments, we deploy custom AXI managers for \textit{ad-hoc} traffic generation and cycle-accurate performance monitors. The generic custom manager represents a generic configurable \emph{controller} $C_i$ issuing requests for transactions -- we will refer to that as $GC_i$.
In the FPGA, we leverage CVA6 and the PULP cluster to generate the traffic \review{with synthetic software benchmarks} and rely on their performance-monitoring registers to collect the measurements. The experimental designs are deployed on the AMD-Xilinx VCU118, using the Vitis 2022.1 toolchain.
\review{Similar State-of-the-Art works upper bounding the execution time of a single transaction leverage synthetic benchmarks to measure the worst-case access times since generic applications fail to do so\cite{restuccia2022bounding,hassanzoni,wu2023ditty}.}
\review{For this reason, we concentrate on synthetic benchmarks at the IP and the system level.}

\subsection{Component-level hardware modules}\label{ss:exper-IP-latencies}

\begin{figure*}[t]
    \centering
    \includegraphics[width=\textwidth]{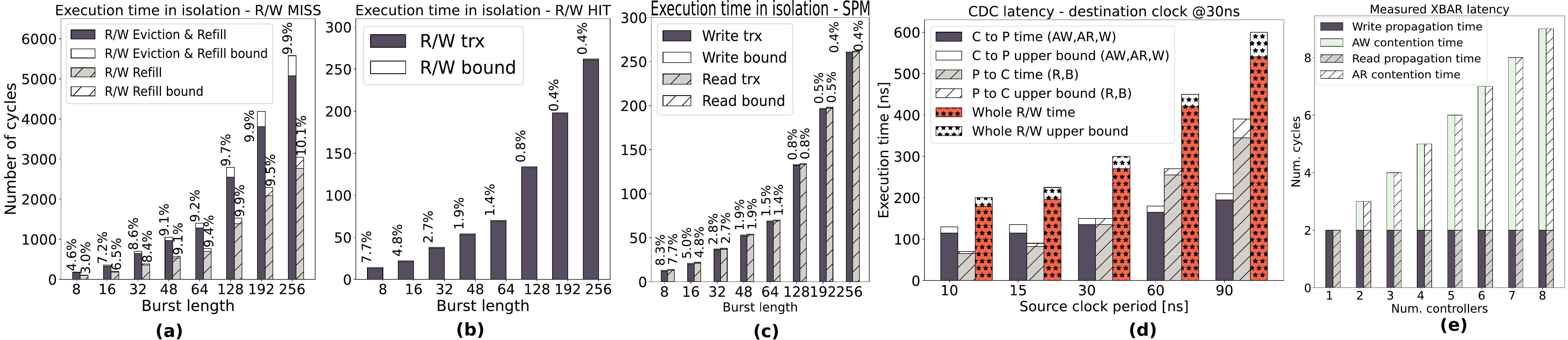}
    \caption{Services time in isolation.}
    \label{fig:experiments}
\end{figure*}

%
\subsubsection{Delays analysis}
\review{This subsection presents the tests run to measure the worst-case access latency time in isolation for the \emph{peripherals} ($d_{R/W}^{P_j}$), the \emph{crossbar} ($d_{R/W}^{R_0}$) and the \emph{bridges} ($d_{R/W}^{Q_j}$) from Section~\ref{sec:wcea}.}
We \review{connect the generic controller $CG_i$ to the IP under analysis for these experiments. We let $CG_i$} issue 100'000 transactions, \review{one at a time}, with random burst length ($\beta_i$). We monitor the service times and then pick the longest ones \review{for different $\beta_i$}.

\review{Figure~\ref{fig:experiments} compares the maximum measured experimental delays with the upper bound proposed in Section~\ref{sec:wcea}.}
Figure~\ref{fig:experiments}(a) reports the maximum service time of the main memory subsystem in case of a miss as a function of the burst length of the transaction under analysis, either when (i) only a refill is necessary and (ii) both refill and eviction are necessary, compared with the bounds proposed in Section~\ref{analysis:ms}. 
The measured service times are lower than the bounds. The pessimism is between 3\% and 10.1\%; the larger $\beta_i$, the higher the pessimism.
Higher pessimism on longer transactions is due to the internal splitting at the LLC.
As from our analysis, the memory subsystem is not fully pipelined ($\rho^{MS}=0$).
However, in practice, the control and data phases of consecutive sub-transactions might be partially served in parallel by the LLC and the memory controller.
This means that the longer the transaction, the higher the number of sub-transactions and their overlap, and the lower the service time compared to our model. 
Thus, the pessimism increases.
Figure~\ref{fig:experiments}(b) reports the measured results on the main memory subsystem but in case of a hit, compared with the bounds proposed in Section~\ref{analysis:ms}.
As we consider an LLC hit, the access to the HyperRAM is not performed: this test analyzes the service time of the LLC.
Our bounds are always upper bounds for the maximum measured results. 
The trend here is reversed w.r.t. Figure~\ref{fig:experiments}(a) -- as $\beta_i$ increases, the relative pessimism decreases from 7.7\% down to 0.4\%.
In this case, the source of the pessimism comes only from the control time, which does not depend on $\beta_i$, while there is no pessimism on the data time. 
Hence, this pessimism gets amortized as the burst length and the overall service time increase.
We conduct the same experimental campaign also on the AXI SPM -- the measured results, compared with the bounds proposed in Section~\ref{ssec:l2spm}, are reported in Figure~\ref{fig:experiments}(c).
The trends are similar to the ones reported in Figure~\ref{fig:experiments}(b) for LLC hits -- the pessimism of our analysis is limited to 1 and 2 clock cycles for reads and writes on the control time, respectively.
As in the case of the LLC HITs, the upper bound on the control overhead gets amortized for longer transactions, and the pessimism reduces from 8.8\% to 0.5\%.

Figure \ref{fig:experiments}(d) reports the maximum measured latency to cross an AXI CDC FIFO as a function of the manager clock period (the subordinate clock period is fixed to 30 ns) and compared with the bounds proposed in Section~\ref{ssec:axi_cdc_fifo}. The results are independent of the length of the transaction. 
To stimulate the highest variability, the phases of the clocks are randomly selected on a uniform distribution.
The first bar reports the crossing delays from the manager to the subordinate side, corresponding to the delays introduced on the AW, W, and AR AXI channels.
The second bar reports the crossing delays from the subordinate to the manager side, corresponding to the overall delays on the AXI R and B channels.
The third bar shows the overall delay on a complete transaction, corresponding to the sum of the two previously introduced contributions (see Section~\ref{ssec:axi_cdc_fifo}). 
The pessimism of our bounds is, at most, one clock cycle of the slowest clock between manager and subordinate. 

Figure~\ref{fig:experiments}(e) reports the measured propagation delays introduced by the crossbar over an entire write and read transaction, compared with the bounds of Section~\ref{analysis:xbar}, varying the number of \emph{controllers}. 
As explained in Section \ref{analysis:xbar}, the propagation delay is the sum of the propagation latency without interference (eq. \ref{eq:xbar_prop}) and the additional contention latency (eq. \ref{eq:xbar_contention}), which depends on the number of \emph{controllers}. 
Thanks to the simplicity of the arbitration operated by the crossbar (pure round-robin), our proposed bounds exactly match the measurements. 
We conducted the experimental campaign also on the IO subsystem.
We measured the maximum service time and compared it with the upper bounds of Section~\ref{analysis:io}, which we do not show for space reasons: such IP supports only single-word transactions. 
Our upper bounds exceed the maximum measured service time with pessimism of down to 2 clock cycles, with service times of 4 (write) and 5 (read) clock cycles.

\subsubsection{Parallelism}
\review{We also demonstrate our analysis of parallelism of the \emph{peripherals} ($\chi_{R/W}^{P_j}$) and the \emph{crossbar} ($\chi_{R/W}^{R_0}$) analyzed in Section~\ref{sec:wcea}.}
\review{To do so,} we configured $CG_i$ to issue unlimited outstanding transactions to the \emph{peripheral} under test. In parallel, we monitor the maximum number of accepted outstanding transactions.
Our measurements match our analysis: the maximum number of outstanding transactions is defined by the maximum parallelism accepted at the input stage of the peripherals and the crossbar.

\subsection{System-level experiments}\label{ss:exper-system}

While the previous experiments focused on the evaluation at the IP level, this set of experiments aims to evaluate the system-level bounds proposed in Section~\ref{sec:system-level-analysis}.
We first validate our analysis in simulation.
\review{We developed a System Verilog testbench with two configurable AXI synthetic \review{\emph{controllers}} $CG_i$ connected to the target architecture (see Figure~\ref{fig:he-arch}) stimulating overload conditions to highlight worst-case scenarios.}
We also validate our results on FPGA, generating traffic with CVA6 and the PULP cluster.

At first, we evaluate the results in isolation \emph{at the system level} as a function of the burst length, leveraging the same strategy used for the previous experiments.
\review{Namely, these tests are meant to validate Lemma \ref{lemma:st_single_iso} (eq. \ref{eq:one_trx_iso}).}
\review{To measure the service time at the system level in isolation, we let one $GC_i$ issue 100'000 transactions, one at a time, with different $\beta_i$, while the other $GC_k$ is inactive. We monitor the service times and then pick the longest ones for each $\beta_i$.}
Figures~\ref{fig:sl-experiments} (a) and (b) report the maximum measured system-level response times in isolation for completing a transaction issued by the generic \emph{controller} $GC_i$ and directed to (a) the main memory subsystem (case of cache miss, causing either refill or both refill and eviction) and (b) to the SPM memory, compared with the bounds proposed in Lemma~\ref{lemma:st_single_iso}. 
The measured service times are smaller than the bounds in all the tested scenarios.
The measure and the trends reported in Figure~\ref{fig:sl-experiments}(a) are aligned with the ones found at the IP level and reported in Figure~\ref{fig:experiments}(a). 
This is because the overhead introduced by the crossbar (in isolation) and the CDC FIFOs is negligible compared to the memory subsystem's service time. 
Figure~\ref{fig:sl-experiments}(b) shows a trend aligned with the results at the IP-level reported in Figure~\ref{fig:experiments}(c): the lower $\beta_i$, the higher the pessimism. 
It is worth mentioning that the analysis shows higher pessimism at the system level than at the IP level.
This is due to the extra pessimism from the crossbar and the CDC, which is nevertheless amortized on longer transactions, down to 1.9\%.

We now analyze the results under maximum interference, \review{to verify the results of Lemma~\ref{lemma:max_num_intf_read} and \ref{lemma:intf_diff_kind} and Theorem \ref{theorem:max_intf}}. 
\review{For these tests, the execution of $GC_i$ (100'000 transactions, one at a time) receives interference by \emph{controller} $GC_k$. $\beta_k$ is fixed and equal to $\beta_i$, while we vary the amount of outstanding transactions $GC_k$ can issue ($\phi_{R/W}^{CG_k}$).}
Figures~\ref{fig:sl-experiments} (c), (d), and (e) report the maximum measured system-level response times for completing a transaction issued by the generic \emph{controller} $GC_i$ and directed to (c) the main memory with an LLC miss considering $\beta_i=16$, (d) the SPM memory, considering $\beta_i=16$, and (e) the SPM memory, considering $\beta_i=256$, and compare them with the upper bounds proposed in equation~\ref{eq:theorem}. 
Figures~\ref{fig:sl-experiments} (c), (d), and (e) verify the results of Lemma~\ref{lemma:max_num_intf_read} and \ref{lemma:intf_diff_kind}: when $\phi_{R/W}^{CG_k} > \chi_{R/W}^{MS}$ (two bars on the right), the total service time is defined by the parallelism of the peripheral itself -- as expected, after saturating the number of interfering transactions accepted by the peripheral, the measured results are the same regardless of the increase of $\phi_{R/W}^{CG_k}$. 
Differently, when $\phi_{R/W}^{CG_k} \leq \chi_{R/W}^{MS}$, a reduced value of $\phi_{R/W}^{CG_k}$ corresponds to lower interference and response times.
Figure~\ref{fig:sl-experiments}(c) refers to the case of an LLC miss under interference when $\beta_k=16$.
The results confirm the safeness of our analysis, which correctly upper bounds the overall response times with a pessimism around $15\%$, which is slightly higher than the pessimism of a transaction in isolation at the system level. 
As explained in the previous subsection, when multiple transactions are enqueued, the memory subsystem can partially serve their data and control phases in parallel. 
However, our model only allows $\rho^{MS}=1$ or $\rho^{MS}=0$, i.e., either the \emph{peripheral} is fully pipelined or not pipelined at all.
Since $\rho^{MS}=0$, the pessimism is slightly higher when more transactions are enqueued (and partially served in parallel) as equation \ref{eq:max_intf_1trx} counts the service time of a transaction fully when $\rho^{MS}=0$. 
Varying $\beta_k$ of $GC_k$ gives comparable results -- we do not report such results for briefness and lack of space. 
We provide two charts for the SPM, in Figure~\ref{fig:sl-experiments}(d) and Figure \ref{fig:sl-experiments}(e). 
The comparison of the two charts highlights how the interfering transactions' length impacts the analysis's pessimism, ranging between $19.7\%$ for $\beta=16$ to $1\%$ for $\beta=256$.
The trend here is aligned with the service time at the system level in isolation: the pessimism comes from the control times of SPM and propagation latency of the crossbar and the CDC, which are amortized as the data time increases with $\beta_k$.

\begin{figure*}[t]
    \centering
    \includegraphics[width=\textwidth]{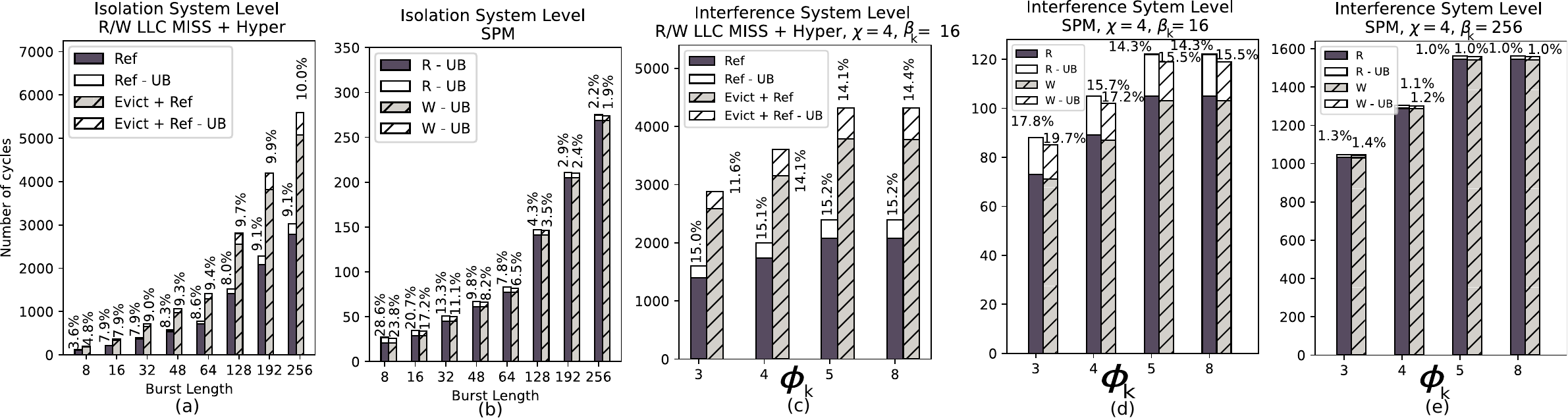}
    \caption{Services times under interference.}
    \label{fig:sl-experiments}
\end{figure*}

\subsection{Discussion}
\review{In this Section, we validated the analysis of Sections ~\ref{sec:wcea} and \ref{sec:system-level-analysis} through an extensive set of tests.}
\review{We demonstrated how the proposed approach enables detailed explanations of the analysis's pessimism and facilitates iterative refinement.}
\review{This allows us to derive upper bounds that are safe yet not overly pessimistic, particularly when compared to similar state-of-the-art works based on closed-source or loosely-timed IPs.}
\review{Nevertheless, while the methodology is promising, the resulting analysis may seem limited in comparison to other works that model more sophisticated closed-source IPs.}
\review{Here, we discuss the limitations of our analysis, focusing on its dependence on the underlying characteristics of the available open-source hardware.}

\review{It is noteworthy how the analysis leverages the round-robin policy of the main interconnect and the in-order nature of \emph{peripherals} in Lemmas \ref{lemma:max_num_intf_read} and \ref{lemma:intf_diff_kind}.}
\review{The absence of internal reordering allows to derive the number of transactions preceding the one under interference directly from the arbitration policy.}
\review{As long as the \emph{peripherals} serve the transactions in order, extending the analysis to support other arbitration policies is expected to require minimal effort.}
\review{Instead, supporting \emph{peripherals} with internal transaction reordering can lead to \emph{timing anomalies}\cite{enabling_compositionability} and make the proposed model unsafe, as previously demonstrated in \cite{ooocoresareSOnasty}.}
\review{Our analysis focuses on the available \emph{peripherals} within the target architecture, as out-of-order \emph{peripherals} are not available open-source to us.}
\review{We envision expanding the analysis to match higher-performance platforms as open-source hardware evolves.}

\review{Lastly, it is important to note that the analysis bounds only a single transaction issued by $C_i$ -- this limitation is not imposed on the interfering controllers.}
\review{Lemma \ref{lemma:max_num_intf_read} does not consider $C_i$ to have more pending transactions, except for the ones already accepted by $P_j$.}
\review{In other words, Lemma \ref{lemma:max_num_intf_read} assumes that there is not a queue of transactions buffered in the \emph{bridges} between $C_i$ and $R_0$, which could exist when $P_j$ is full.}
\review{We could potentially extend the model to define a batch of enqueued transactions and then modify Lemma \ref{lemma:max_num_intf_read} to analyze this scenario.}
\review{Such an extension would further build upon the proposed model and analysis, which is limited to bound the access time of a single transaction.}

\section{Related Work}\label{sec:related}

\begin{table*}[ht]
\scriptsize
\centering
\caption{Comparison with State-of-the-Art works for predictability. IC = Interconnect. DMR = Deadline miss ratio.}
\label{tab:soa}
\begin{tabular}{|c|c|c|c|c|c|}
\hline
& \multirow{1}{*}{Target}     
& \multirow{1}{*}{\begin{tabular}[c]{@{}c@{}}Analysis  on\end{tabular}}
& \multirow{1}{*}{\begin{tabular}[c]{@{}c@{}}Pessimism\end{tabular}}
& \multirow{1}{*}{\begin{tabular}[c]{@{}c@{}}Technology\end{tabular}} 
& \multirow{1}{*}{\begin{tabular}[c]{@{}c@{}}Open  RTL\end{tabular}} 
\\ \hline

\multirow{1}{*}{\begin{tabular}[c]{@{}c@{}}Biondi  et. al. \cite{MPAM-analysis}\end{tabular}}
& \multirow{1}{*}{\begin{tabular}[c]{@{}c@{}}ARM MPAM  Protocol\end{tabular}}
& \multirow{1}{*}{\begin{tabular}[c]{@{}c@{}}Protocol specs (Model)\end{tabular}}                           
& \multirow{1}{*}{No HW}   
& \multirow{1}{*}{\xmark}
& \multirow{1}{*}{\xmark}  \\ \hline


\multirow{1}{*}{\begin{tabular}[c]{@{}c@{}}Hassan et. al. \cite{hassanzoni}\end{tabular}}
& \multirow{1}{*}{\begin{tabular}[c]{@{}c@{}}JEDEC DDR3 Protocol\end{tabular}}
& \multirow{1}{*}{\begin{tabular}[c]{@{}c@{}}Protocol specs (Model)\end{tabular}}                           
& \multirow{1}{*}{$0\% - 200\%$}
& \multirow{1}{*}{\xmark}
& \multirow{1}{*}{\xmark}  \\ \hline

\multirow{1}{*}{\begin{tabular}[c]{@{}c@{}}\review{Abdelhalim et.al.}  \cite{ooocoresareSOnasty}\end{tabular}}
& \multirow{1}{*}{\begin{tabular}[c]{@{}c@{}}Whole mem. hier.\end{tabular}}
& \multirow{1}{*}{\begin{tabular}[c]{@{}c@{}}IPs \& System (C++ Model)\end{tabular}}                           
& \multirow{1}{*}{$16\% - 50\%$}                           
& \multirow{1}{*}{\xmark}
& \multirow{1}{*}{\xmark}  \\  \hline

\multirow{1}{*}{\begin{tabular}[c]{@{}c@{}}BlueScale  \cite{jiang2022bluescale}\end{tabular}}
& \multirow{1}{*}{\begin{tabular}[c]{@{}c@{}}Hier.  mem. IC\end{tabular}}
& \multirow{1}{*}{\begin{tabular}[c]{@{}c@{}}IC  uArch (Black-box)\end{tabular}}                           
& \multirow{1}{*}{DMR}                           
& \multirow{1}{*}{FPGA}
& \multirow{1}{*}{\xmark}  \\  \hline

\multirow{1}{*}{\begin{tabular}[c]{@{}c@{}}AXI-RT-IC  \cite{jiang2022axi} \end{tabular}}
& \multirow{1}{*}{\begin{tabular}[c]{@{}c@{}}AXI SoC  IC\end{tabular}}
& \multirow{1}{*}{\begin{tabular}[c]{@{}c@{}}IC  uArch (Black-box)\end{tabular}}                           
& \multirow{1}{*}{DMR}                        
& \multirow{1}{*}{FPGA}
& \multirow{1}{*}{\xmark}  \\ \hline


\multirow{1}{*}{\begin{tabular}[c]{@{}c@{}}Restuccia et. al. \cite{restuccia2022bounding}\end{tabular}}
& \multirow{1}{*}{\begin{tabular}[c]{@{}c@{}}AXI Hier.  mem. IC\end{tabular}}
& \multirow{1}{*}{\begin{tabular}[c]{@{}c@{}}IC  uArch (Black-box)\end{tabular}}                           
& \multirow{1}{*}{$50\%-90\%$}   
& \multirow{1}{*}{FPGA}
& \multirow{1}{*}{\xmark}  \\  \hline

\multirow{1}{*}{\begin{tabular}[c]{@{}c@{}}\review{AXI-REALM}  \cite{AXI-REALM}\end{tabular}}
& \multirow{1}{*}{\begin{tabular}[c]{@{}c@{}}AXI traffic  regulator\end{tabular}}
& \multirow{1}{*}{No analysis}                           
& \multirow{1}{*}{No model}                           
& \multirow{1}{*}{FPGA \& ASIC}
& \multirow{1}{*}{\cmark}   \\ \hline

\multirow{1}{*}{\begin{tabular}[c]{@{}c@{}}Ditty  \cite{wu2023ditty}\end{tabular}}
& \multirow{1}{*}{\begin{tabular}[c]{@{}c@{}}Cache coher.  mechanism\end{tabular}}
& \multirow{1}{*}{\begin{tabular}[c]{@{}c@{}}IP (Fine-grained RTL)\end{tabular}}                           
& \multirow{1}{*}{$100\%-200\%$}   
& \multirow{1}{*}{FPGA}
& \multirow{1}{*}{\cmark} \\ \hline

\multirow{1}{*}{\begin{tabular}[c]{@{}c@{}}This  Work\end{tabular}}
& \multirow{1}{*}{\begin{tabular}[c]{@{}c@{}}SoC IC,  peripherals \&  system-level\end{tabular}}
& \multirow{1}{*}{\begin{tabular}[c]{@{}c@{}}IP \& System (Fine-grained RTL)\end{tabular}}                         
& \multirow{1}{*}{$1\%-28\%$}   
& \multirow{1}{*}{\begin{tabular}[c]{@{}c@{}}FPGA \&  ASIC\end{tabular}}
& \multirow{1}{*}{\cmark} \\  \hline
\end{tabular}
\end{table*}

\review{In this Section, we provide a thorough comparison with previous works focusing on enhancing the timing predictability of digital circuits.}
\review{Traditionally,} the majority of these works leverage commercial off-the-shelf devices\cite{survey_multi_core_scs,thewcetproblem} or predictable architectures modeled with a mix of cycle-accurate and behavioral simulators\cite{TCREST}. 
Also, they focus on bounding the execution times for predefined specific software tasks rather than the individual transaction service times\cite{survey_multi_core_scs,confidence,TCREST,enabling_compositionability}. 
\review{Furthermore, they} build the models from dynamic experiments rather than from static analysis, largely due to the dearth of detailed hardware specifications\cite{mitraDandT}, \review{limiting the generality of their approach.}
\review{More recent works advocate for static modeling and analysis of protocols\cite{MPAM-analysis,hassanzoni}, interconnect\cite{jiang2022bluescale,restuccia2022bounding,jiang2022axi}, and shared memory resources\cite{ooocoresareSOnasty,wu2023ditty} to provide more generic and comprehensive models.}
\review{While their value is undeniable, due to the unavailability of the source RTL, each one focuses on only one of these resources, resulting in a significant penalty to the pessimism of the upper bounds\cite{ooocoresareSOnasty}}.
Our work breaks from this convention, presenting a holistic static model of an entire open-source architecture rigorously validated through RTL cycle-accurate simulation and FPGA emulation. 
\review{As Table \ref{tab:soa} shows, this is the first work to analyze and model the open-source silicon-proven RTL of all the IPs composing a whole SoC to build the least pessimistic upper bounds for data transfers within the architecture when compared to similar SoA works.}

Biondi et al.~\cite{MPAM-analysis} developed a model of the memory-access regulation mechanisms in the ARM MPAM and provided detailed instantiations of such mechanisms, which they then evaluated with IBM CPLEX, a decision optimization software for solving complex optimization models.
While elegant, this approach is not validated on hardware and, therefore, is limited in terms of applicability and precision.
A more practical and adopted approach is the one proposed by Hassan and Pellizzoni\cite{hassanzoni}.
The authors develop a fine-grained model of the JEDEC DDR3 protocol, validated with MCsim\cite{mirosanlou2020mcsim}, a cycle-accurate C++ memory controller simulator.
Unfortunately, not having access to the RTL prevents fine-grained modeling and analysis and mandates over-provisioning, strongly impacting the overall pessimism of the system, which can be as high as 200\%.
\review{Abdelhalim et al. in \cite{ooocoresareSOnasty} present a study bounding the access times of memory requests traversing the entire memory hierarchy and propose $\mu$architectural modifications to the arbiters in such hierarchy.}
\review{Their modifications result in very low pessimism (down to 16\%) on synthetic and real-world benchmarks. However, the results are validated on C++ models of the cores, interconnect, and memory controllers, not RTL code targeting silicon implementation.}

More recently, different researchers proposed models of hardware IPs that they could validate through cycle-accurate experiments \cite{restuccia2022bounding, jiang2022axi, restuccia2020axi}.
In \cite{restuccia2022bounding}, Restuccia et al. focused on upper bounding the response times of AXI bus transactions on FPGA SoCs through the modeling and analysis of generic hierarchical interconnects arbitrating the accesses of multiple hardware accelerators towards a shared DDR memory.
In this work, the interconnect under analysis is a proprietary Xilinx IP, which had to be treated as a black box.
Also, due to the unavailability of the RTL code, the authors did not model the other IPs composing the target platform, limiting the precision of the proposed upper bounds, which achieve a pessimism between 50\% and 90\%.
Jiang et al. modeled, analyzed, and developed AXI-IC$^{\text{RT}}$\cite{jiang2022axi} and Bluescale\cite{jiang2022bluescale}, two sophisticated interconnects providing predictability features and coming with a comprehensive model.
However, the model and analysis proposed in AXI-IC$^{\text{RT}}$\cite{jiang2022axi}, and Bluescale\cite{jiang2022bluescale} are not as fine-grained as ours: the authors do not provide upper bounds of the access times but rather focus on the deadline miss ratio given a fixed workload for the different controllers in the system. Moreover, the authors do not provide the RTL of such solutions.
\review{AXI-REALM \cite{AXI-REALM} proposes completely open-source IPs supporting predictable communications. However, it misses a holistic model and analysis.}
In Ditty~\cite{wu2023ditty}, researchers propose an open-source predictable directory-based cache coherence mechanism for multicore safety-critical systems that guarantees a worst-case latency (WCL) on data accesses with almost cycle-accurate precision.
However, Ditty's model only covers the coherency protocol latency and the core subsystem, overlooking system-level analysis and achieving very pessimistic boundaries.
In this landscape, it emerges clearly that our work is the first one covering both modeling and analysis of the interconnects and the shared memory resources, with an in-depth analysis of silicon-proven open-source RTL IPs and achieving the lowest pessimism when compared to similar SoA works.

\section{Conclusions}\label{sec:conclusion}

\review{In conclusion, this is the first work to bridge the gap between open-source hardware and predictability modeling and analysis.}
\review{It presented (i) a fine-grained model and analysis for the typical building blocks composing modern heterogeneous low-power SoCs directly based on the source RTL, and (ii) a full mathematical analysis to upper bound data transfer execution times.}
\review{Namely, we demonstrated a methodology that successfully exploits the availability of the source code to provide safe, but not overly pessimistic, upper bounds for the execution times of data transfers when compared to similar SoA works based on closed-source IPs.}

\review{As discussed in Section \ref{sec:exps}, after this thorough evaluation, we envision extending our results to other popular open-source IPs and different arbitration policies.} To hopefully stimulate novel research contributions, we open-source a guide to replicate the results shown in Section \ref{sec:exps} at \url{https://github.com/pulp-platform/soc_model_rt_analysis}, comprehensive of the simulated environment and the software benchmarks to run on a sophisticated Cheshire-based SoC targeting automotive applications.

\bibliographystyle{IEEEtran}
\bibliography{IEEEabrv,bib}

\vspace{-11mm}

\begin{IEEEbiographynophoto}
    {Luca Valente} received the MSc degree in electronic engineering from the Politecnico of Turin in 2020. He is currently a PhD student at the University of Bologna in the Department of Electrical, Electronic, and Information Technologies Engineering (DEI). His main research interests are hardware-software co-design of heterogeneous SoCs.
\end{IEEEbiographynophoto}

\begin{IEEEbiographynophoto}{Francesco Restuccia} received a PhD degree in computer engineering (cum laude) from Scuola Superiore Sant’Anna Pisa, in 2021. He is a postdoctoral researcher at the University of California, San Diego. His main research interests include hardware security, on-chip communications, timing analysis for heterogeneous platforms, cyber-physical systems, and time-predictable hardware acceleration of deep neural networks on commercial
FPGA SoC platforms. 
\end{IEEEbiographynophoto}

\begin{IEEEbiographynophoto}{Davide Rossi} received the Ph.D. degree from the University of Bologna in 2012. He has been a Post-Doctoral Researcher with the Department of Electrical, Electronic and Information Engineering “Guglielmo Marconi,” University of Bologna, since 2015, where he is currently an Associate Professor position. His research interests focus on energy-efficient digital architectures. In this field, he has published more than 100 papers in international peer-reviewed conferences and journals. 
\end{IEEEbiographynophoto}

\begin{IEEEbiographynophoto}{Ryan Kastner} is a professor in the Department
of Computer Science and Engineering at UC San Diego. He received a PhD in Computer Science at UCLA, a masters degree in engineering and bachelor degrees in Electrical Engineering and Computer Engineering from Northwestern University. His current research interests fall into three areas: hardware acceleration, hardware security, and remote sensing.
\end{IEEEbiographynophoto} 

\begin{IEEEbiographynophoto}{Luca Benini} holds the chair of Digital Circuits and Systems at ETHZ and is Full Professor at the Universit\`{a} di Bologna. He received a PhD from Stanford University. His research interests are in energy-efficient parallel computing systems, smart sensing micro-systems and machine learning hardware. He has published more than 1000 peer-reviewed papers and 5 books. He is a Fellow of the ACM and a member of the Academia Europaea.    
\end{IEEEbiographynophoto}

\end{document}